\newcommand{\old}[1]{{}}
\newcommand{\later}[1]{{}}
\newtheorem{theorem}{Theorem}
\newtheorem{lemma}{Lemma}
\newtheorem{corollary}{Corollary}
\newtheorem{observation}{Observation}
\def\etal{{et~al.}}
\def\eg{{e.g.}}
\def\ie{{i.e.}}
\newcommand{\eps}{\varepsilon}
\newcommand{\RR}{\mathbb{R}}
\def\F{\mathcal F}
\def\V{\mathcal V}
\newcommand{\alg}{\textsf{ALG}}
\newcommand{\opt}{\textsf{OPT}}
\newcommand{\orp}{\textsf{ORP}}
\newcommand{\ewrp}{\textsf{EWRP}}
\newcommand{\tspn}{\textsf{TSPN}}
\newcommand{\etsp}{\textsf{ETSP}}
\newcommand{\wrp}{\textsf{WRP}}
\newcommand{\rtsp}{\textsf{RTSP}}
\newcommand{\tsp}{\textsf{TSP}}
\newcommand{\setcov}{\textsf{Set Cover}}
\newcommand{\len}{{\rm len}}
\newcommand{\per}{{\rm per}}
\newcommand{\diam}{{\rm diam}}
\newcommand{\conv}{{\rm conv}}
\begin{document}
\title{Observation Routes and External Watchman Routes}
\author{Adrian Dumitrescu\thanks{Algoresearch L.L.C., Milwaukee, WI, USA.
Email: \texttt{ad.dumitrescu@algoresearch.org}}
\and
Csaba D. T\'oth\thanks{Department of Mathematics, California State University Northridge, Los Angeles, CA; and Department of Computer Science, Tufts University, Medford, MA, USA. Email: \texttt{csaba.toth@csun.edu}. Research on this paper was supported, in part, by the NSF awards DMS-1800734 and DMS-2154347.}
}
\date{}
\maketitle              

\begin{abstract}
  We introduce the Observation Route Problem (\orp) defined as follows: Given a set of $n$ pairwise disjoint
  compact regions in the plane, find a shortest tour (route) such that an observer walking along this tour can see
  (observe) some point in each region from some point of the tour. The observer does \emph{not} need to see
  the entire boundary of an object. The tour is \emph{not} allowed to intersect the interior of any region
  (\ie, the regions are obstacles and therefore out of bounds).
  The problem exhibits similarity to both the Traveling Salesman Problem with Neighborhoods (\tspn)
  and the External Watchman Route Problem (\ewrp).
  We distinguish two variants: the range of visibility is either limited to a bounding rectangle, or unlimited.
We obtain the following results:

\smallskip
(I) Given a family of $n$ disjoint convex bodies in the plane, computing a shortest observation route does not admit a $(c\log n)$-approximation unless $\P = \NP$ for an absolute constant $c>0$. (This holds for both limited and unlimited vision.)

\smallskip
(II) Given a family of disjoint convex bodies in the plane, computing a shortest external watchman route is $\NP$-hard. (This holds for both limited and unlimited vision; and even for families of axis-aligned squares.)

\smallskip
(III) Given a family of $n$ disjoint fat convex polygons, an observation tour
whose length is at most $O(\log{n})$ times the optimal can be computed in polynomial time. (This holds for limited vision.)

\smallskip
(IV) For every $n \geq 5$, there exists a convex polygon with $n$ sides and all angles obtuse such that its perimeter is \emph{not} a shortest external watchman route.  This refutes a conjecture by Absar and Whitesides (2006).
\end{abstract}

\section{Introduction}  \label{sec:intro}

Path planning and visibility are two central areas in computational geometry and robotics.
In path planning, a short collision-free path between two specified points is desired,
and the robot has to see or detect obstacles in order to avoid them in its path. Hence there is a close relation
between short paths and visibility. Moreover, visibility of an object (say, an obstacle) can be accomplished at
various degrees; for instance, sometimes it may suffice to simply detect the presence of an obstacle, and other times the robot may need to map or recognize (\eg, see) the entire boundary of an obstacle in order to select a meaningful action.

In the Traveling Salesman with Neighborhoods problem ($\tspn$), given a set of regions (neighborhoods)
in the plane, one is to compute a shortest closed route (tour)
that visits each neighborhood; whereas in the External Watchman Route Problem ($\ewrp$), given a set of disjoint regions in the plane, one is to compute a shortest closed route (tour) in the exterior of a region (\ie, in the \emph{free space}) so that every point on the boundary of every region is visible from some point of the tour. These problems were posed about three decades ago by Arkin and Hassin~\cite{AH94}
  and by Ntafos and Gewali~\cite{NG94}, respectively.
A small example that illustrates $\orp$ and $\ewrp$ appears in Fig.~\ref{fig:star}.
Here we introduce the following related problem we call the Observation Route Problem (\orp):

\begin{quote}
\orp: Given a set of $n$ pairwise disjoint compact regions in the plane, find a shortest route (tour) such that an observer going along this tour can see (observe) each of the regions from at least one point of the tour. The tour cannot enter the interior of any region.
\end{quote}

\begin{figure}[ht]
\centering
\includegraphics[scale=0.7]{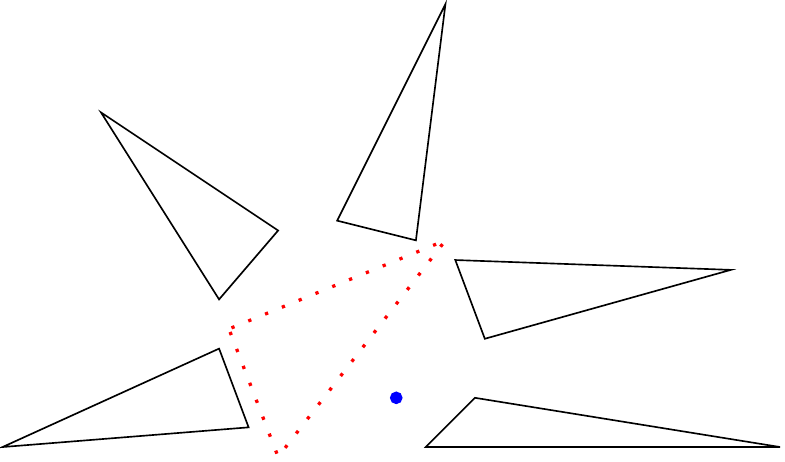}
\caption{An observation route (the blue point) and an external watchman route (dotted, in red)
  for a set of five triangles.}
\label{fig:star}
\end{figure}

\paragraph{Related work.}
In the Euclidean Traveling Salesman problem (\etsp), given a set
of points in the plane (or in the Euclidean space $\RR^d$, $d \geq 3$),
one seeks a shortest tour (closed curve) that visits each point.
In the \emph{TSP with neighborhoods} (\tspn), each point is replaced by a
(possibly disconnected) region~\cite{AH94}. The tour must visit
at least one point in each of the given regions (\ie, it must intersect each region).
Since $\etsp$ is $\NP$-hard in $\RR^d$ for every $d \geq 2$ \cite{GGJ76,GJ79,P77}, $\tspn$ is also $\NP$-hard for every $d \geq 2$.

At about the same time, Arora~\cite{Ar98} and Mitchell~\cite{Mi99}
independently showed that $\etsp$ in $\RR^d$, for constant $d$,
admits a polynomial-time approximation scheme (PTAS).
In contrast, $\tspn$ is harder to approximate and certain instances are known to be $\APX$-hard. However, better approximations can be obtained for neighborhoods with ``nice'' geometric properties: connected, pairwise disjoint, or fat, or of comparable sizes, etc. 
Arkin and Hassin~\cite{AH94} gave constant-factor approximations for translates of a connected region; Dumitrescu and Mitchell~\cite{DM03} extended the above result to connected neighborhoods of comparable diameters.

For $n$ connected (possibly overlapping) neighborhoods in the plane,
$\tspn$ can be approximated with ratio $O(\log{n})$ by the algorithms of
(i)~Mata and Mitchell~\cite{MM95},
(ii)~Gudmundsson and Levcopoulos~\cite{GL99}, and
(iii)~Elbassioni, Fishkin, and Sitters~\cite{EFS09}.
The $O(\log{n})$-approximation stems from the following early result by Levcopoulos and Lingas~\cite{LL84}:
Every (simple) rectilinear polygon $P$ with $n$ vertices, $r$ of which are reflex, can be partitioned
in $O(n \log{n})$ time into rectangles whose total perimeter is $\log{r}$ times the perimeter of $P$.
We will use any of the three algorithms mentioned above as a subroutine in our approximation algorithm
for $\orp$ in Section~\ref{sec:approx}.

In the Watchman Route Problem ($\wrp$), given a polygonal domain $P$,
the goal is to find a shortest closed curve within $P$ such that every point of $P$
is seen from some point along the curve.
Thus $\wrp$ is dual to $\ewrp$ in the sense that the former deals with the interior of
a polygonal domain whereas the latter deals with the exterior of one or more polygons.
The watchman route problem in a simple polygon $P$
first considered by Chin and Ntafos as early as 1986~\cite{CN86,CN88,CN91}.
After more than one decade of being in a tangle~\cite{CJN99,Tan01,Tan07},
its polynomial status appears to have been settled by Tan~\etal~\cite{THI99}.
The current fastest algorithm, running in $O(n^4 \log{n})$ time,
is due to Dror~\etal~\cite{DELM03}. A linear-time 2-approximation algorithm is due to Tan~\cite{Tan07}. Ntafos and Gewali~\cite{NG94} showed that a shortest external watchman route for a $n$-vertex convex polygon can be found in  $O(n)$ time. The case of two convex polygons was studied in~\cite{GN98}.
The first polynomial-time approximation algorithm for the watchman route problem in $n$-vertex polygons
with holes was given by Mitchell~\cite{Mi13}; its approximation ratio is $O(\log^2{n})$.
Nilsson and \.Zyli\'nski~\cite{NZ20} showed that computing a shortest tour that sees $k$ specified points in a polygon with $h$ holes is fixed parameter tractable  (FPT) with the parameter $h+k$, but the problem in general cannot be polynomially approximated better than by a factor of $c \log n$ for some constant $c>0$, unless $\P =\NP$.

Regarding the degree of approximation achievable, $\tspn$ for arbitrary
neighborhoods is $\APX$-hard~\cite{BGK+05,SS06}, and approximating $\tspn$
for connected regions in the plane within a factor smaller than 2 is intractable ($\NP$-hard)~\cite{SS06}.
The problem is also $\APX$-hard for disconnected regions~\cite{SS06},
the simplest case being point-pair regions~\cite{DO08}.
It is conjectured that approximating $\tspn$ for disconnected regions in
the plane within a $O(\log^{1/2} n)$ factor is intractable~\cite{SS06}.
Computing the minimum number of point guards, vertex guards, or edge guards,
are all $\APX$-hard~\cite{ESW01} and finding the minimum number of point guards is $\exists \RR$-complete~\cite{AbrahamsenAM22}, even for simple polygons (without holes).
That is, there is a constant $\delta>0$ such that no polynomial-time algorithm achieves an approximation ratio of $1 +\delta$ for any of these problems unless $\P = \NP$.
The survey by Urrutia~\cite{Ur00} gives an introduction to these problems.
See also \cite{BonnetM20} and \cite{Kirkpatrick15} for recent approximations and parameterized hardness results. Historically, watchman routes under limited visibility have been considered by Ntafos~\cite{Nt92};
see also~\cite{Mi00,Mi17} for a survey of the many variants of $\wrp$.
other variants are discussed in~\cite{Mi00,Mi17}.
As mentioned earlier, the problem of computing shortest \emph{external} watchman routes for collections of disjoint polygons was suggested by Ntafos and Gewali~\cite{NG94}.

\paragraph{Definitions and notations.}
A curve is called \emph{simple} if it has no self-intersections.
A \emph{simple polygon} $P$ is a polygon without holes, that is, the interior of the polygon is topologically equivalent
to a disk. A \emph{polygon with holes} is obtained by removing a set of nonoverlapping, strictly interior, simple subpolygons from $P$~\cite{OST17}.
The Euclidean length of a curve $\gamma$ is denoted by $\len(\gamma)$,
or just $|\gamma|$ when there is no danger of confusion.
Similarly, the total (Euclidean) length of the edges
of a geometric graph $G$ or a polygon $P$ is denoted by $\len(G)$ and
$\per(P)$, respectively.

A TSP tour for a set $\F$ of regions (neighborhoods) in $\RR^d$, $d \geq 2$, is a closed curve in the ambient space
that intersects~$\F$ (\ie,  $\gamma$ intersects each region in $\F$).
For $\alpha \geq 1$, an approximation algorithm (for $\orp$, $\ewrp$,  or $\tspn$)
has ratio $\alpha$ if its output tour $\alg$ satisfies $\len(\alg) \leq \alpha \, \len(\opt)$, where $\opt$ is an optimal tour
for the respective problem.

A \emph{convex body} $C \subset \RR^d$ is a compact convex set with nonempty interior.
Its \emph{boundary} is denoted by $\partial C$ and its \emph{interior} by $\mathring{C}$.
The \emph{width} of a convex body $C$ is the minimum width of a strip of parallel lines enclosing $C$.
Informally, a convex body is \emph{fat} if its width is comparable with its diameter.
More precisely, for $0 \leq \lambda \leq 1$, a convex body $C$ is $\lambda$-\emph{fat}
if its width $w$ is at least $\lambda$ times the diameter: $w \geq \lambda \cdot \diam(C)$,
and $C$ is \emph{fat} if the inequality holds for a constant $\lambda$.
For instance, a square is $\frac{1}{\sqrt2}$-fat, a $3 \times 1$ rectangle is $\frac1{\sqrt{10}}$-fat and a segment is $0$-fat.
Let $\gamma$ be a closed curve. The \emph{geometric dilation} of $\gamma$ is
\begin{equation*}
 \delta(\gamma) := \sup\limits_{p,q \in\gamma} \frac{d_\gamma(p,q)}{|pq|},
\end{equation*}
where $d_\gamma(p,q)$ is the shortest distance along $\gamma$ between $p$ and $q$.
For example, the geometric dilation of a the boundary of a square is $2$ and that of a $3 \times 1$ rectangle is $4$.

Points $p$ and $q$ are mutually \emph{visible} if the segment $pq$ does not intersect the interior of any region in $\F$~\cite{OR17}.
An object $O \in \F$ can be \emph{seen} (or \emph{observed}) from a point $p$ if there is a point $q \in \partial O$
such that $p$ and $q$ are mutually visible.
The convex hull of a set $A \subset \RR^d$ is denoted by $\conv(A)$.

\subsection{Our results}

In Section~\ref{sec:prelim}, as a preliminary result, we show that given a set of convex polygons $\F$,
determining whether $\F$ can be observed from a single point can be done by a polynomial time algorithm.
In Section~\ref{sec:approx} we show that given a set of $n$ pairwise disjoint fat convex polygons, an observation tour
whose length is at most $O(\log{n})$ times the optimal can be computed in polynomial time.
The algorithm reduces the $\orp$ problem to $\tspn$ for polygons with holes
and then executes additional local transformations of the tour that only increase the total length by at
most a constant factor.
In Subsection~\ref{subsec:translates}
we show that the case of translates
(within the same class) allows for a simplification in the algorithm.

\begin{theorem} \label{thm:approx}
Given a family of $n$ pairwise disjoint fat convex polygons, an observation tour
whose length is at most $O(\log{n})$ times the optimal can be computed in polynomial time.
\end{theorem}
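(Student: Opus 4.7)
The plan is to reduce the problem to $\tspn$ in a polygonal domain with holes and apply an $O(\log n)$-approximation algorithm. First I would take a bounding rectangle $B$ enclosing all of the polygons and set $D = B \setminus \bigcup_{i=1}^{n} \mathring{P_i}$; then $D$ is a polygon with $n$ holes that represents the free space. For each $P_i$ I would define a connected polygonal neighborhood $N_i \subseteq D$ whose points all observe $P_i$. The canonical choice is $N_i = \partial P_i$; alternatively, one may take a thin outer ``halo'' just outside $P_i$ of width controlled by the fatness parameter, so that every point of $N_i$ observes $P_i$ and $N_i$ avoids all other polygons.

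The main algorithmic step is to apply one of the $O(\log n)$-approximation $\tspn$ algorithms of Mata--Mitchell~\cite{MM95}, Gudmundsson--Levcopoulos~\cite{GL99}, or Elbassioni--Fishkin--Sitters~\cite{EFS09} --- each resting on the Levcopoulos--Lingas rectangle decomposition~\cite{LL84} --- to the instance $(D, \{N_i\})$. This returns a tour $T$ with $\len(T) \leq O(\log n) \cdot \opt_{\tspn}(D,\{N_i\})$ that visits every $N_i$, and hence observes every $P_i$. Any residual local defects are repaired by local transformations, such as shortcuts along hole boundaries and small detours around reflex vertices of $D$, which push $T$ fully into $D$ at the cost of at most a constant factor in length.

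The crux of the argument will be the inequality $\opt_{\tspn}(D,\{N_i\}) \leq c \cdot \opt_{\orp}$ for an absolute constant $c$ depending only on the fatness parameter $\lambda$. To establish it I would start from an optimal observation tour $T^*$ and, for each $i$, pick an observation point $p_i \in T^*$ together with a witness $q_i \in \partial P_i$ with $\overline{p_i q_i} \subseteq D$; grafting the round-trips $p_i \to q_i \to p_i$ onto $T^*$ yields a tour that touches every $P_i$, of total length $\opt_{\orp} + 2 \sum_i |p_i q_i|$. The hard part will be to bound $\sum_i |p_i q_i| = O(\opt_{\orp})$ via a charging scheme that exploits fatness in two ways: each $P_i$ has area $\Omega(\lambda \cdot \diam(P_i)^2)$ and these areas are pairwise disjoint, and a local multiplicity bound on how many sight segments can overlap a given region also follows from the fatness hypothesis. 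Without the fatness assumption this step genuinely breaks down, as the ratio $\opt_{\tspn}/\opt_{\orp}$ can be made unbounded.
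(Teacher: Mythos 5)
Your overall architecture---reduce to \tspn\ among the obstacles, invoke one of the $O(\log n)$-approximation algorithms of \cite{MM95,GL99,EFS09}, then repair the tour at constant-factor cost using fatness---matches the paper's, but your choice of neighborhoods breaks the proof at its crux. The inequality $\opt_{\tspn}(D,\{N_i\}) \leq c \cdot \opt_{\orp}$ with $N_i = \partial P_i$ (or a thin halo around $P_i$) is \emph{false} for fat disjoint convex bodies, and no charging scheme can rescue it. Take $n$ tiny congruent disks (or axis-aligned squares, to stay with polygons) centered at the points of a $\sqrt{n}\times\sqrt{n}$ grid scaled into the unit square: these are disjoint and fat ($1$-fat, resp.\ $\frac{1}{\sqrt{2}}$-fat) independently of their size, yet a tour of length $O(1)$ circling just outside the convex hull observes all of them, while any tour touching every body has length $\Theta(\sqrt{n})$. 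This is precisely Lemma~\ref{lem:compare1} of the paper, which gives $\opt_\ewrp(\F)=\Theta(1)$ (hence $\opt_\orp(\F)=O(1)$) but $\opt_\tspn(\F)=\Theta(\sqrt{n})$. The failure mode is that an observation point $p_i$ may lie far from $P_i$, and a single short tour can observe many distant polygons at once, so $\sum_i |p_iq_i|$ cannot be charged to $\opt_\orp$; your area-packing argument bounds how many \emph{bodies} fit near the tour, not how far the sight segments reach. Your closing remark has the role of fatness backwards: the ratio $\opt_\tspn/\opt_\orp$ is unbounded \emph{even with} fatness.

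The paper's fix is to run the \tspn\ subroutine not on the polygons or their boundaries but on their \emph{visibility regions} $V_i = V(P_i)$. With that choice the inequality you were trying to prove becomes a tautology: every observation route visits every $V_i$ by definition, so $\opt_\tspn(\{V_i\}) \leq \opt_\orp(\F)$ with constant $1$ and no charging whatsoever. The genuinely needed ingredients are then (i)~Lemma~\ref{lem:connected}, showing each $V_i$ is a \emph{connected} polygonal region of polynomial complexity, so that the $O(\log n)$-approximation algorithms (which handle connected, possibly overlapping neighborhoods) apply; and (ii)~the repair step, which is where fatness actually earns its keep: the \tspn\ tour may cross obstacle interiors, and replacing each maximal chord $T \cap C$ by the shortest path along $\partial C$ costs at most the geometric dilation $\delta(C) \leq \min\left(\pi\lambda^{-1},\, 2(\lambda^{-1}+1)\right) = O(1)$ by Lemma~\ref{lem:dilation}. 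Your Step-3 repair is in the same spirit as the paper's, but fatness is needed there and \emph{only} there---not in comparing the two optima.
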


In Section~\ref{sec:np-hard}, we prove the $\NP$-hardness of both $\orp$ and $\ewrp$ for both limited and unlimited vision (Theorems~\ref{thm:ORP-is-hard} and~\ref{thm:EWRP-is-hard}).
Throughout this paper, the term \emph{limited vision} refers to unrestricted vision in a given bounding box of the family.

\begin{theorem} \label{thm:ORP-is-hard}
  Given a family of disjoint convex bodies in the plane, computing a shortest observation route is $\NP$-hard.
  (This holds for both limited and unlimited vision.)
  The problem remains so even for families of axis-aligned squares.
\end{theorem}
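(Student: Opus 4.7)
My plan is to give a polynomial-time reduction from \setcov to \orp for axis-aligned squares. This same construction, with a finer calibration of distances, also yields the logarithmic inapproximability asserted in item (I) of the paper, so the two results may be proved uniformly.

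Let $(U,\mathcal{S})$ be a \setcov instance with universe $U=\{e_1,\dots,e_m\}$ and sets $S_1,\dots,S_n$. I will build a family $\F$ of pairwise disjoint axis-aligned squares comprising three gadget types: (i) $m$ small \emph{element squares} $E_1,\dots,E_m$ placed along a horizontal row high above the rest of the construction; (ii) for each set $S_j$, a \emph{chamber region} $R_j$—an open rectangular area flanked above and below by walls—spaced along a horizontal ``backbone''; and (iii) opaque \emph{walls} assembled from long chains of closely packed unit squares, arranged so that an observer standing inside $R_j$ can see $E_i$ precisely when $e_i\in S_j$, and so that $E_i$ is invisible from every point of the free space outside all chambers. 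With the right scaling, any valid observation tour must enter a collection of chambers $\{R_j:j\in J\}$ with $\bigcup_{j\in J} S_j = U$, and its total length takes the form $L_0 + c\,|J| \pm o(c)$ for a fixed backbone-traversal cost $L_0$ and a uniform per-chamber detour cost $c$. Consequently, the optimal \orp tour length determines the minimum set-cover size, yielding NP-hardness.

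The principal obstacle is \emph{visibility engineering under the square restriction}: realizing an arbitrary bipartite incidence pattern between chambers and elements is routine with thin rectangular walls, but emulating each such wall by a chain of axis-aligned unit squares (with negligible gaps) while maintaining pairwise disjointness requires careful coordinate perturbation and a geometric check that no unintended line-of-sight slips through the microgaps. Secondary issues include showing that (a) the obstacle-wall squares, which are themselves members of $\F$ and must be observed, are trivially visible from points on the backbone, and (b) no ``sneaky'' tour can witness several element squares from a single point outside all chambers, which is ensured by making the visibility wedges narrow and mutually disjoint and by forcing the element squares to be well separated along the top row.

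Finally, because the whole construction fits inside an axis-aligned bounding rectangle with no regions outside, the optimal tour coincides under \emph{limited} and \emph{unlimited} vision, so the NP-hardness transfers verbatim to both models. I expect the visibility engineering step to be the hardest part; the routing analysis that confirms the claimed additive form $L_0 + c\,|J| \pm o(c)$ should follow from a straightforward case analysis once the gadgets are in place.
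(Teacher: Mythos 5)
Your overall strategy---a reduction from \setcov---is not the route the paper takes for this theorem: the paper reduces from Rectilinear \tsp{} ($\rtsp$), tiling the bounding box with \emph{solid} large squares separated by narrow corridors of width $w=1/(10an)$ and placing, at each input lattice point, a cluster of $25$ small squares whose central square is visible only from inside the cluster's convex hull; it then shows that a rectilinear tour of length $m$ exists iff an observation tour of length $m+\delta$ with $|\delta|\leq 0.4$ exists, exploiting the integrality of rectilinear tour lengths. A \setcov{} reduction is in fact what the paper uses for the stronger inapproximability result (Theorem~\ref{thm:ORP-inapprox}), but there the obstacles are the convex faces of a thickened line arrangement---solid convex polygons---precisely so that no opacity ever needs to be simulated. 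Your proposal has a genuine gap at exactly the step you defer: walls assembled from pairwise \emph{disjoint} axis-aligned squares can never be opaque. Every gap, however small, admits an unbounded double wedge of sight lines through it, and since \orp{} only requires the observer to see \emph{some} boundary point of each element square, a single leak anywhere---for instance a faraway point on the unbounded side of one such wedge, in the unlimited-vision model---simultaneously destroys the claim that $E_i$ is invisible outside the chambers and the claim that a valid tour must pay the per-chamber cost $c$. Staggered double walls block aligned gaps but still admit near-wall-parallel lines threading two offset gaps, so one needs a quantitative angular argument carried out for every gap against every gadget; nothing in the proposal supplies it. Your limited/unlimited equivalence claim rests on this same unproven enclosure property (and Lemma~\ref{lem:far} shows that distant single observation points are a real phenomenon, not a technicality).

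The routing analysis is also asserted rather than proved. The form $L_0+c\,|J|\pm o(c)$ requires a lower bound showing that any tour observing a covering collection of elements pays $c$ per \emph{distinct} chamber entered; the paper's analogue is Lemma~\ref{lem:inapprox}(2b), which bounds the $x$-extents of tour arcs between consecutive corridor visits using the fact that two corridors can only cross near the points $p_i$---and with leaky square walls you would additionally have to exclude tours that observe several elements through microgaps without entering any chamber. Finally, the bonus claim that the same square-based construction yields the $c\log n$ inapproximability is doubtful: the paper's Lemma~\ref{lem:inapprox} crucially applies an anisotropic linear transformation (Stage~4) to make the per-corridor detour cost dominate the backbone cost, and that transformation turns squares into rectangles. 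Accordingly, the paper states Theorem~\ref{thm:ORP-inapprox} only for convex bodies and leaves even constant-factor approximation for unit squares as an open problem (Section~\ref{sec:conclusion}). As written, your reduction is a plausible program, but the core gadget (opaque walls from disjoint squares) and the core inequality (the per-chamber lower bound) are both missing, and the former may not be repairable without substantially changing the construction.
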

\begin{theorem} \label{thm:EWRP-is-hard}
  Given a family of disjoint convex bodies in the plane, computing a shortest external watchman route is $\NP$-hard.
   (This holds for both limited and unlimited vision.)
 The problem remains so even for families of axis-aligned squares.
\end{theorem}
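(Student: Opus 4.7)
My plan is to obtain Theorem~\ref{thm:EWRP-is-hard} as a near-corollary of Theorem~\ref{thm:ORP-is-hard} by showing that on a family of sufficiently small, well-separated axis-aligned squares, the optimal external watchman route is close to the optimal observation route. Specifically, I aim to prove the sandwich
\[
\opt_\orp \;\leq\; \opt_\ewrp \;\leq\; \opt_\orp + O(n\epsilon),
\]
where $\epsilon$ is the common side length of the squares. The lower bound is immediate: a watchman tour sees every boundary point of every square, hence in particular sees each square, so it is also a valid observation tour. The upper bound is obtained by augmenting an ORP-optimum tour $\gamma$ with a small local detour around each square $S_i$ whose boundary $\gamma$ does not already fully see; such a detour has length $2\,d(\gamma,S_i) + 4\epsilon$, which is $O(\epsilon)$ provided $\gamma$ passes within $O(\epsilon)$ of every $S_i$.

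I would take the NP-hard instance produced by Theorem~\ref{thm:ORP-is-hard}, rescale so that every square has side $\epsilon \leq 1/(Cn)$ for a suitably large constant $C$, and make pairwise separations at least $1$. Since standard reductions from integer-valued combinatorial problems produce decision thresholds with gap $\Omega(1)$ between YES and NO instances, choosing $\epsilon$ polynomially small ensures that the additive error $O(n\epsilon)$ does not blur the two cases. A threshold query ``$\opt_\ewrp \leq T + O(n\epsilon)$?'' then decides the same NP-hard problem as ``$\opt_\orp \leq T$?''. Both limited and unlimited vision are handled uniformly: the construction sits inside a fixed bounding rectangle and every visibility argument used is local (no faraway obstacles are required to block long-distance lines of sight), so the vision range plays no role. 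The restriction to axis-aligned squares is inherited from Theorem~\ref{thm:ORP-is-hard}.

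The main obstacle is guaranteeing that the ORP-optimum tour passes within $O(\epsilon)$ of \emph{every} square, so that the per-square augmentation cost is genuinely $O(\epsilon)$ rather than $O(1)$. If the construction of Theorem~\ref{thm:ORP-is-hard} already enjoys this property (e.g., because its blockers force the optimal observer within $O(\epsilon)$ of each square), no further work is required. Otherwise, I would modify the instance by surrounding each square $S_i$ with three tiny ``dummy'' axis-aligned squares placed at distance $\Theta(\epsilon)$ from $S_i$ in three of the four cardinal directions. Any observation tour must approach $S_i$ to within $O(\epsilon)$ in order to see the dummy on the ``open'' side, and the dummies add only $O(n\epsilon)$ to the total length budget, preserving both the axis-aligned-squares restriction and the sandwich inequality above. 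A parallel backup strategy, should this fail, is to re-run the ORP reduction from scratch and design gadgets whose optimal solution is a tour already of the ``watchman'' type, so that $\opt_\ewrp = \opt_\orp$ exactly for the constructed instances.
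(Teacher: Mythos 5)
Your plan diverges from the paper's proof, and it has two genuine gaps. The paper does not reduce from $\orp$ as a black box: it reuses the \emph{same} $\rtsp$ reduction verbatim and proves the analogous claim ($\opt_\ewrp = m + \delta$, $|\delta|\leq 0.4$) directly, via two observations: (i) since $R$ is the smallest axis-aligned rectangle containing $S$, any tour visiting all clusters automatically sees the \emph{entire boundary} of every large grid square, and (ii) a detour of length at most $w$ per cluster suffices to see the full boundaries of the $25$ small squares. Your sandwich $\opt_\orp \leq \opt_\ewrp \leq \opt_\orp + O(n\epsilon)$ cannot be applied to the instances from Theorem~\ref{thm:ORP-is-hard}, because those instances are not families of uniformly tiny squares: they contain $ab$ \emph{large} squares (side $\approx 1-w$) whose role is to block visibility and confine the tour to narrow corridors. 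Your proposed fix --- rescale so every square has side $\epsilon \leq 1/(Cn)$ and enforce pairwise separation at least $1$ --- changes the geometry that makes $\opt_\orp$ track the $\rtsp$ value in the first place (with separations large relative to the squares, the central cluster squares become visible from far away and the corridor structure disappears), so NP-hardness of $\orp$ on the modified family is simply not established. Moreover, your per-square augmentation cost is $O(\epsilon)$ only for the tiny squares; for the large blockers, seeing all four sides is a global requirement, and a generic detour argument would cost $\Theta(1)$ per large square, i.e., $\Theta(ab)$ in total, swamping the $\pm 0.4$ gap. Handling exactly this point is the content of the paper's observation (i), which your black-box approach has no substitute for.

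The second gap is the dummy-square gadget: it conflates visibility with proximity. Three tiny squares at distance $\Theta(\epsilon)$ from $S_i$ do not force an observation tour to come within $O(\epsilon)$ of $S_i$, because the dummy on the ``open'' side can be seen from arbitrarily far away along any unobstructed ray --- seeing a square requires seeing only \emph{some} boundary point. This is precisely why the paper's cluster gadget uses $25$ squares arranged so that the central square is visible only from inside $\conv(\mathcal{G})$; forcing proximity requires genuine blockers, not nearby markers. Your backup plan (redesign the gadgets so that the optimal tour is already of watchman type) is in spirit what the paper actually does, but as stated it is a one-sentence sketch with none of the required details, so the proposal as written does not constitute a proof.
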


In Section~\ref{sec:inapprox}, we prove that one cannot approximate the minimum length of an observation route for $n$ convex bodies in the plane within a factor of $c\,\log n$, for some $c>0$, unless $\P=\NP$.  The inapproximabilty is reduced from \setcov.
\begin{theorem} \label{thm:ORP-inapprox}
  Given a family of $n$ disjoint convex bodies in the plane, the length of a shortest observation route
  cannot be approximated within a factor of $c\,\log n$ unless $\P=\NP$, where $c>0$ is an absolute constant.
  (This holds for both limited and unlimited vision.)
\end{theorem}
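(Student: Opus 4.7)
The plan is to reduce from \setcov. By the hardness of \setcov\ (Feige; Dinur--Steurer), there is an absolute constant $\gamma>0$ so that \setcov\ on universe size $m$ and $k=\mathrm{poly}(m)$ sets cannot be approximated within $\gamma\ln m$ in polynomial time unless $\P=\NP$. Given such an instance $(U,\mathcal{S})$, I will build in polynomial time a family $\F$ of $n=\mathrm{poly}(m,k)$ pairwise disjoint convex bodies in the plane for which $\opt_{\orp}(\F)=\Theta(\opt_{\setcov}(U,\mathcal{S}))$ with absolute constants. Since $\log n = \Theta(\log m)$, a $(c\log n)$-approximation for $\orp$ would then yield a $(c'\log m)$-approximation for \setcov, contradicting the hardness once $c$ is smaller than some absolute threshold.

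\textbf{Gadget.} Place a narrow horizontal \emph{corridor} $C$ of length $\Theta(k)$ that holds $k$ marked points $p_1,\ldots,p_k$ (one per set) at unit spacing. Just above $C$ build a \emph{wall} made of $k+1$ thin disjoint axis-aligned rectangles separated by narrow \emph{slits} $g_1,\ldots,g_k$, with $g_j$ directly above $p_j$. Far above the wall, at height $D$, place $m$ thin convex \emph{target} rectangles $T_1,\ldots,T_m$, one per element of $U$, with horizontal positions chosen so that the line of sight from $p_j$ through $g_j$ meets $T_i$ exactly when $u_i\in S_j$. Tune the parameters so that (i)~visibility of $T_i$ from the corridor is possible only through some $g_j$ with $u_i\in S_j$ (narrow slits, thin targets, large horizontal spacing of the targets at height~$D$); (ii)~any tour reaching above the wall from outside the span of $C$ pays a detour cost $\Omega(D)$ exceeding the trivial upper bound $O(k)$ on $\opt_{\orp}$; and (iii)~all placed bodies are pairwise disjoint and convex. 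The bounding box for the limited-vision variant is just big enough to contain the construction; for unlimited vision, (ii) itself rules out cheating at infinity.

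\textbf{Correspondence.} If $\mathcal{C}\subseteq\mathcal{S}$ is a minimum cover of size $t^*$, the closed curve that runs along $C$ through $\{p_j:S_j\in \mathcal{C}\}$ has length $\Theta(t^*)$ and observes every target, and hence every body of $\F$. Conversely, any observation tour of length $L$ must stay below the wall by (ii); for each $T_i$ it must contain a point in the corresponding visibility window, which by (i) lies near some $p_j$ with $u_i\in S_j$, so the set of visited slits forms a cover, and this cover has size $O(L)$ because the $p_j$ are $\Omega(1)$-spaced. Combining the two bounds gives $\opt_{\orp}(\F)=\Theta(\opt_{\setcov}(U,\mathcal{S}))$, as required.

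\textbf{Main obstacle.} The delicate step is the visibility analysis behind~(i): I must pick slit widths $\varepsilon$, target widths $\delta$, and height $D$ (with $D$ very large and $\varepsilon,\delta$ very small) so that the visibility cone from $p_j$ through $g_j$ intercepts exactly $\{T_i : u_i \in S_j\}$, with no spurious incidences through neighboring slits or from nearby positions in $C$. The core geometric ingredient is a placement of the horizontal positions of the $T_i$ that \emph{encodes} the incidence matrix of $(U,\mathcal{S})$ so that visibilities match set memberships on the nose; once this is in place, the remaining estimates (disjointness, bounding-box sizes, the detour lower bound, and ruling out the degenerate $t^*=1$ case by adding a pair of dummy elements whose covers are disjoint) are routine.
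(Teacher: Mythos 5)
Your high-level strategy (reduction from \setcov\ with polynomially many sets, absorbing constants into $c$) matches the paper, but the gadget you defer to the ``main obstacle'' paragraph is exactly where the construction breaks, and it cannot be repaired in the form you describe. From a single point $p_j$ through a single narrow slit $g_j$, the visible region at height $D$ is one narrow interval (the cross-section of a cone with apex $p_j$); hence the set of targets seen through $g_j$ is precisely the set of $T_i$ whose horizontal positions fall in that one interval. With one slit per set and one fixed position per target, you can therefore only encode set systems in which every set is a \emph{contiguous interval} of elements in a common linear order -- arbitrary incidence matrices of \setcov\ are not of this form, so no choice of $\varepsilon$, $\delta$, $D$ and target placements realizes your property~(i). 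The paper's construction resolves this with one narrow corridor per \emph{incidence}: for each $j\in S_i$ it draws the line $p_iq_j$, so the corridors of a given set are concurrent at $p_i$ and a single cheap visit to $p_i$ sees all targets of $S_i$; the obstacles are the convex faces of the resulting line arrangement, which automatically \emph{enclose} the targets so that each $Q_j$ is visible only from its own corridors. Your open-air wall also fails this enclosure requirement: nothing blocks the line of sight from a distant low point beyond the wall's horizontal extent to a target's side edge, so one cheap station far to the left observes every $T_i$ without using any slit. Your claim~(ii) only penalizes tours going \emph{over} the wall; going \emph{around} it costs $\Theta(k+m)$, not $\Omega(D)$, and seeing the targets from outside requires no climbing at all.

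There is a second, independent flaw in the cost accounting. Even granting~(i), the tour through the chosen points $\{p_j: S_j\in\mathcal{C}\}$ has length proportional to the \emph{span} of those points along the corridor, which can be $\Theta(k)$ even when $t^*=2$ (take a cover $\{S_1,S_k\}$ with the $p_j$ at unit spacing); so $\opt_\orp(\F)=\Theta(t^*)$ is false, and the reduction does not preserve the optimum even up to constant factors. The paper handles this with a final anisotropic linear transformation (its Stage~4) that compresses the baseline direction so that traversing the \emph{entire} baseline costs at most $\frac12$, while each loop from the baseline to an apex $p_i$ costs about $1$; this yields the sharp equivalence ``cover of size $k$ iff observation tour of length at most $k+\frac12$,'' together with a counting lower bound showing that consecutive visits to distinct sets' corridors force returns to the baseline, each of $x$-extent roughly $2M^{-2}$. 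Without such a rescaling -- or some substitute; clustering the $p_j$ within $O(1)$ would destroy the slit selectivity you rely on -- the ``Correspondence'' step of your proposal fails, so both the geometric encoding and the length calibration need the paper's additional ideas.
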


In Section~\ref{sec:pentagon} we study the structure of shortest external watchman routes for a convex polygon
(\ie, $|\F|=1$).  In 2006, Absar and Whitesides conjectured that all convex polygons with all angles \emph{obtuse}
have convex-hull routes as their shortest external watchman routes~\cite{AW06}.
Theorem~\ref{thm:pentagon} below refutes this conjecture for every $n \geq 5$.

\begin{theorem} \label{thm:pentagon}
For every $n \geq 5$ there exists a convex polygon with $n$ sides and all angles obtuse such that its
perimeter is \emph{not} a shortest external watchman route.
\end{theorem}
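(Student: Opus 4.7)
The plan is to refute the Absar--Whitesides conjecture by constructing, for each $n\ge 5$, an obtuse convex $n$-gon $P_n$ together with an external watchman tour $\gamma_n$ of length strictly less than $\per(P_n)$. The essential case is the pentagon $n=5$; larger $n$ follow by subdividing one edge of $P_5$ using new vertices of interior angle close to but strictly less than $180^\circ$, which preserves obtuseness and leaves the tour essentially intact.

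For the pentagon, I would take a shape where one vertex $v$ has interior angle close to $180^\circ$ (strictly less) while the remaining four angles stay just above $90^\circ$; the angle sum $540^\circ$ is compatible with this. The edges $uv$ and $vw$ incident to $v$ are then nearly collinear, and the strict triangle inequality $|uw|<|uv|+|vw|$ provides a positive slack that the tour can try to exploit. The candidate tour $\gamma_5$ follows the three non-incident edges of $P_5$ along the perimeter and replaces the pair $uv,vw$ by a short exterior arc arranged to lie in the intersection $\bar H_{uv}\cap\bar H_{vw}$ of the closed exterior half-planes of the two replaced edges. From every point of such an arc both replaced edges are visible, and this intersection is a nonempty wedge near $v$ precisely because the angle at $v$ is strictly less than $180^\circ$; this is what obtuseness at $v$ buys.

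The key quantitative step is to exhibit the exterior arc with length strictly less than $|uv|+|vw|$. The chord $uw$ itself lies in $\mathring{P}_5$, so the arc cannot coincide with it, and the central geometric claim is that the arc may be pushed just outside $v$ and remain both in the free space and of length close enough to $|uw|$ to beat $|uv|+|vw|$. I would carry this out by parameterising the arc by its perpendicular distance to the line $uw$ at $v$ and showing that the resulting length is a continuous function of this parameter, taking values strictly between $|uw|$ and $|uv|+|vw|$ as the parameter shrinks to $0$ from above.

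The main obstacle is reconciling this saving with the classical lower bound that any closed curve enclosing a convex body $K$ has length at least $\per(K)$. Since $\gamma_5$ encircles $P_5$, a naive application of the bound forbids any saving; the resolution must therefore come from a delicate geometric argument showing that the tour I construct does not have $P_5$ strictly inside its convex hull --- for instance, by letting the shortcut arc bend toward the chord $uw$ so that the convex hull of the tour omits a thin sliver near $v$. Making this precise, verifying the free-space and visibility constraints on the arc, and producing a concrete pentagon with a numerically positive saving constitute the core of the argument; the generalisation to $n>5$ then follows immediately by the edge-subdivision trick above.
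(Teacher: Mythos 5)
There is a genuine gap, and it sits exactly at the step you flagged as the ``main obstacle.'' Your candidate tour follows three edges of $P_5$ and replaces $uv, vw$ by an exterior arc; such a tour still winds around the polygon, and the escape route you propose --- bending the arc toward the chord $uw$ so that the convex hull of the tour omits a sliver near $v$ --- is geometrically impossible: the region between the chord $uw$ and the two edges $uv, vw$ is precisely the triangle $uvw \subset P_5$, i.e.\ the \emph{interior} of the obstacle, which the route may not enter. Consequently any arc from $u$ to $w$ passing on the $v$-side of the polygon stays (weakly) outside $P_5$, its union with the chord $uw$ has convex hull containing the triangle $uvw$, and the hull of the whole tour contains $P_5$; by perimeter monotonicity of convex hulls the tour has length at least $\per(P_5)$. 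One can also see the failure locally: your arc is required to meet the wedge $\bar H_{uv}\cap \bar H_{vw}$, whose apex is $v$, and every $u$-to-$w$ path through that wedge has length at least $|uv|+|vw|$ (the ellipse with foci $u,w$ through $v$ is tangent to the wedge at its apex), so the claimed lengths ``strictly between $|uw|$ and $|uv|+|vw|$'' are unattainable. In other words, within the class of routes you consider --- closed tours that wrap around the polygon --- the Absar--Whitesides conjecture is \emph{true}, so no choice of pentagon can make your plan work.

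The missing idea is to abandon the wrap-around topology altogether. Following Ntafos and Gewali, a shortest external watchman route can also be of a second type: the doubling of an open curve (a ``2-leg'' route) whose convex hull does \emph{not} contain the polygon, which is how the perimeter lower bound is evaded. The paper takes a flat pentagon with angles $120^\circ,120^\circ,90^\circ,90^\circ,120^\circ$ (then perturbed to make all angles strictly obtuse), base of length $2$, vertical sides of length $\eps$, and slanted top sides of length $a=2/\sqrt3$. The route is a doubled $3$-chain running near the base whose two short legs are perpendicular to the \emph{extensions of the supporting lines} of the two top sides: from any point strictly beyond the supporting line of an edge one sees that entire edge, so the legs need only reach those lines, and the base and tiny vertical sides are seen from near the base. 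This gives a watchman route of length $<4(1+\eps)$, while $\per(P')>2(1+a-\eps)\approx 4.31$, a strict saving for small $\eps$. Your edge-subdivision idea for $n>5$ is compatible in spirit with the paper's ``shaving'' argument (and with the remark that $n\geq 5$ is necessary since triangles and quadrilaterals have a nonobtuse angle), but it cannot rescue the proposal, because the core pentagon step fails for every pentagon.
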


In Section~\ref{sec:comparison} we compare the optimal solutions for the three problems discussed
($\orp$, $\ewrp$,  and $\tspn$).
While the lengths of the optimal tours for these problems can differ substantially for a given input,
we exhibit two natural scenarios when they are roughly the same.
\begin{theorem} \label{thm:roughly-the-same}
  Consider the two scenarios below:
\begin{enumerate}\itemsep -1pt
\item[{\rm (i)}]  Let $\F$ be a maximal packing of unit disks in a large square $S$.
  (That is, one cannot extend this packing by adding new disks contained in $S$). Then
  $\opt_\orp(\F) = \Theta(n)$, $\opt_\ewrp(\F) = \Theta(n)$, and $\opt_\tspn(\F)=\Theta(n)$.
\item[{\rm (ii)}]  Let $\F$ the family of axis-aligned squares in the $\NP$-hardness reduction
 from a set of $n$ integer points in the proof of Theorem~\ref{thm:ORP-is-hard}. Then $\opt_\tspn(\F)| \geq n$,
 $ |\opt_\orp(\F) - \opt_\tspn(\F)| <1$, and $|\opt_\ewrp(\F) - \opt_\tspn(\F)| <1$.
\end{enumerate}
\end{theorem}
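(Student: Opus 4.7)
For part (i), let $\ell$ be the side of $S$. Disjointness of the unit disks gives $n\pi \leq \ell^2$, while maximality (the disks of radius $2$ around the centers must cover $S$) gives $4n\pi \geq \ell^2$, so $n = \Theta(\ell^2)$. The upper bound for all three problems comes from a boustrophedon sweep of $S$ with horizontal sweeps spaced by $2$: the total length is $\Theta(\ell^2)=\Theta(n)$, and every unit disk is met or observed by some sweep. Local detours around each disk (cost $O(1)$ per disk) turn this into a valid $\orp$ tour, and adding a small tangent loop through the free-space pockets around each disk turns it into an $\ewrp$ tour. For the $\opt_\tspn$ lower bound, let $L = \opt_\tspn(\F)$: the Minkowski $1$-tube of the tour has area at most $2L+\pi$ and must contain every center, while the unit disks around the $n$ centers are pairwise interior-disjoint (total area $n\pi$) and lie in the $2$-tube of the tour (area $\leq 4L+4\pi$); rearranging yields $L = \Omega(n)$. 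Every $\ewrp$ tour is in particular an $\orp$ tour, so $\opt_\ewrp \geq \opt_\orp$, and $\opt_\orp = \Omega(n)$ follows by the same tube argument applied to the points on the tour that observe each disk, combined with an amortization handling sightlines that pierce several disks at once.

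For part (ii), let $Q_i$ be the axis-aligned square of side $s$ placed at the integer point $p_i$ by the reduction; by choosing $s < 1/(10n)$ the total local ``slack'' across all $n$ squares stays below $1$. To see $\opt_\tspn(\F) \geq n$, I snap any $\tspn$ tour to its visited integer centers by two-sided detours of cost at most $s\sqrt 2$ each, obtaining a closed walk on the $n$ distinct integer points $p_i$; this walk has $n$ edges each of length $\geq 1$, hence total length $\geq n$, and the snapping cost $ns\sqrt 2 < 1$ is absorbed by the combinatorial integrality of the tour order forced by the reduction. The bounds $|\opt_\orp - \opt_\tspn| < 1$ and $|\opt_\ewrp - \opt_\tspn| < 1$ follow because the three problems at each tiny square differ only by local modifications of cost $O(s)$: an optimal $\tspn$ tour can be rerouted just outside each $Q_i$ to yield an $\orp$ (or, with an extra micro-loop, an $\ewrp$) tour, and by the reduction-specific geometry an $\orp$/$\ewrp$ tour must approach each $Q_i$ just as closely as a $\tspn$ tour; the total discrepancy is $O(ns) < 1$.

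The main obstacle is the $\opt_\orp$ lower bound in part (i) for ``degenerate'' maximal packings -- for instance, axis-aligned rows of tangent unit disks with inter-row gaps just under $2$ -- where sightlines can traverse many disks and a single tour point may observe linearly many of them. I would resolve this by an amortized visibility argument: from any tour point the set of disks visible in a single direction contributes bounded cardinality per unit of tour arc-length measured in the free space, so observing all $n$ disks still forces tour length $\Omega(n)$. A secondary subtlety in part (ii) is tightening $\opt_\tspn(\F) \geq n - 1$ (which the snapping argument gives directly) to the strict bound $\opt_\tspn(\F) \geq n$; I expect to close this gap using the specific combinatorial structure of the reduction in Theorem~\ref{thm:ORP-is-hard}, where the TSPN tour order must match an integer-length walk on the grid.
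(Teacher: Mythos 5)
Your upper bounds and the $\opt_\tspn$ lower bound in part~(i) are correct and essentially the paper's: a Few-style strip traversal with $O(1)$ local detours per disk for the upper bounds, and your tube/area argument is the standard disk-packing bound that the paper imports from~\cite{DM03}. The genuine gap is exactly where you flag it: the $\Omega(n)$ lower bound for $\opt_\orp$. The ``amortized visibility'' claim you propose --- boundedly many disks observed per unit of tour arc-length --- is false, and your own degenerate family essentially shows it. Take disjoint unit disks centered on a rectangular lattice with horizontal spacing slightly over $2$ and vertical half-spacing $c$ slightly below $\sqrt3$; this packing is maximal, since every point of $S$ is within distance $\sqrt{1+c^2}<2$ of a center (note the inter-row gap must be kept below $2\sqrt3-2$, not ``just under $2$,'' or a new disk fits into the aligned pockets). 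The inter-row corridors are then straight empty strips reaching $\partial S$, and from a single point at a corridor mouth the segment to the bottommost point of \emph{every} disk in the row above stays strictly inside the free strip: one tour point observes $\Theta(\sqrt{n})$ disks, so no per-arc-length counting can force $\Omega(n)$. The paper's proof uses an entirely different and much heavier ingredient: it partitions $S$ into subsquares of side $A=10^{109}$ and invokes the forest-hiding theorem of Dumitrescu and Jiang (Theorem~\ref{thm:hide}, confirming a conjecture of Mitchell) to obtain a \emph{hidden point} in each subsquare, and concludes that any observation tour must visit all $\Theta(n)$ subsquares. That hidden-point machinery is the missing idea in your proposal; nothing in your sketch substitutes for it. (Note also that even with hidden points in hand, the step ``hidden point forces an \emph{observation} tour into the subsquare'' deserves scrutiny, since a hidden point is a free-space point while $\orp$ only requires seeing some boundary point of each disk; the paper asserts this step without elaboration, and your corridor configuration is precisely the kind of instance on which it should be tested.) Since you derive the $\ewrp$ bound from $\opt_\ewrp\geq\opt_\orp$, the gap propagates there as well.

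For part~(ii) the paper gives no new argument at all: it points back to the two Claims in the proofs of Theorems~\ref{thm:ORP-is-hard} and~\ref{thm:EWRP-is-hard}, where tours are converted in both directions with total additive error at most $2n(a+b)w\leq 0.4<1$ for corridor width $w=1/(10an)$. Your snapping argument is the same mechanism, but you misremember the construction: besides the tiny squares there are $ab$ large grid squares, and each point of $S$ carries a cluster of $25$ small squares whose \emph{central} square is visible only from inside the cluster's hull. That specific property --- not a generic ``reduction-specific geometry'' --- is what forces an $\orp$ or $\ewrp$ tour to approach each cluster as closely as a $\tspn$ tour, and it must be invoked explicitly; it is also why large squares can be ignored (they are seen automatically). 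Your worry about $\opt_\tspn\geq n$ versus $\geq n-1$ is of the same granularity as the paper's own bookkeeping: a closed tour meeting all $n$ clusters induces a closed walk with at least $n$ hops between distinct lattice reference points, each of length at least $1-O(w)$ after snapping, and the total slack is absorbed by the same $0.4$ budget the paper's Claims allow --- the paper handles this silently rather than by any combinatorial integrality argument of the kind you anticipate. These are fixable bookkeeping issues; the substantive missing ingredient is the hidden-point argument in part~(i).
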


\section{Preliminaries}\label{sec:prelim}

Throughout the paper we consider families of disjoint compact convex sets in the plane; and are only concerned
with \emph{external} visibility. See Fig.~\ref{fig:visibility} for an example.

\begin{figure}[ht]
\centering
\includegraphics[scale=0.45]{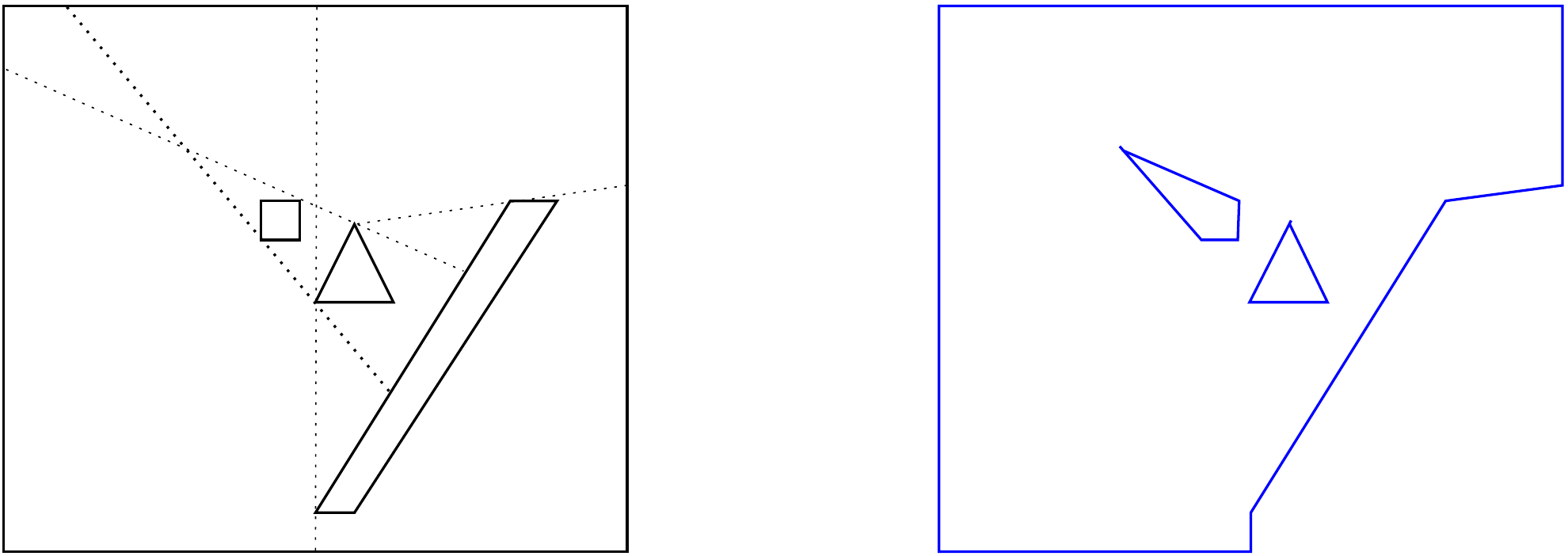}
\caption{Left: three convex polygons in a bounding box.
  Right: the visibility region $V$ of the triangle is a polygon with two holes.}
\label{fig:visibility}
\end{figure}

For a set $\F$ of $n$ disjoint polygons in a rectangle $R$, the \emph{visibility region} of a polygon $P \in \F$, denoted $V(P)$, is the set of all point $p\in R$ such that there exists a point $q\in \partial P$ such that the line segment $pq$ is disjoint from $\overset{\circ}{P'}$ for all $P'\in \F$.

\begin{lemma} \label{lem:connected}
  Given a set $\F$ of $n$ disjoint convex polygons with a total of $m$ vertices in a rectangle $R$, for every $C\in \F$, the visibility region $V(C)$ is a polygon with $O(m+n^2)$ vertices and $O(n^2)$ holes.
\end{lemma}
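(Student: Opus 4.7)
The plan is to exhibit $\partial V(C)$ as a subcomplex of an arrangement of polygon boundaries together with a controlled collection of critical bitangent lines and segments, and then read off the claimed complexity bounds from a standard arrangement count.

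I first analyze a point $p \in \partial V(C)$ lying in the free space. Since visibility of $\partial C$ switches across $p$, some extremal line of sight from $p$ to $\partial C$ must be critically blocked. A short angular-coverage argument (tracking the directions in which obstacles' angular ``blocks'' from $p$ cover the angular range subtended by $\partial C$) shows that the shadow edge containing $p$ lies on a common tangent line $\ell$ of two disjoint convex polygons in $\F$. Two subcases arise: \emph{(a)} $\ell$ is a common tangent of $C$ and some $C' \in \F \setminus \{C\}$, occurring when the angular block of $C'$ from $p$ just reaches an endpoint of $\partial C$'s angular range (single-obstacle transition); or \emph{(b)} $\ell$ is a common tangent of two polygons $C', C'' \in \F \setminus \{C\}$ whose angular blocks from $p$ exactly meet inside $\partial C$'s range (combined-obstacle transition). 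Since two disjoint convex polygons share at most four common tangent lines, type~(a) produces at most $4(n-1) = O(n)$ full lines, and type~(b) produces at most $4\binom{n-1}{2} = O(n^2)$ lines.

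The key observation is that each type-(b) bitangent contributes only a single short segment to $\partial V(C)$, whose two endpoints lie on type-(a) rays: a combined-obstacle shadow region in the free space is bounded locally by the two type-(a) tangent rays through its occluders $C', C''$, connected by a single intervening piece of the type-(b) bitangent. Hence the $O(n^2)$ type-(b) bitangents contribute only $O(n^2)$ short edges overall, not the $O(n^4)$ that a full line arrangement would. Forming the arrangement $\mathcal{A}$ inside $R$ of (i) the $n$ polygon boundaries (total complexity $m$), (ii) the $O(n)$ type-(a) lines, and (iii) the $O(n^2)$ type-(b) segments, the vertex count is $O(m+n^2)$: polygon boundaries contribute $m$; the type-(a) lines produce $\binom{O(n)}{2} = O(n^2)$ pairwise intersections together with at most $2 \cdot O(n) \cdot n = O(n^2)$ intersections with convex polygon boundaries; each type-(b) segment contributes $O(1)$ endpoints on type-(a) rays. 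Since $\partial V(C) \subseteq \mathcal{A}$, the region $V(C)$ has $O(m+n^2)$ vertices.

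For the hole count, I partition the bounded components of $R \setminus V(C)$ into three classes: interiors of single polygons in $\F \setminus \{C\}$ (at most $n-1$), single-obstacle umbras bounded entirely by type-(a) edges (at most $n-1$), and combined-obstacle shadows, each bounded by at least one type-(b) segment ($O(n^2)$ in total). Summing gives $O(n^2)$ holes. The main technical obstacle is the dichotomy of Step~1 together with the single-segment claim for type-(b) bitangents; both rest on a careful angular-coverage analysis and must handle degenerate configurations (coinciding bitangents, tangent lines aligned with polygon edges, three-way tangencies) by a direct case analysis, after which the arrangement counting is entirely routine.
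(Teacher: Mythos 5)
Your proposal has two genuine gaps. The first is simply missing content: the lemma asserts that $V(C)$ \emph{is a polygon} with $O(n^2)$ holes, i.e., that $V(C)$ is a connected region, and you never address connectedness. This is not a formality: connectivity is exactly the property used downstream (Algorithm~1 feeds the regions $V(C_i)$ to a $\tspn$ subroutine that requires connected neighborhoods). The paper closes this with a short argument your arrangement analysis does not supply: for any $p,q\in V(C)$ pick witnesses $p',q'\in\partial C$ visible from $p$ and $q$ respectively, and join $p$ to $q$ by the $3$-leg path consisting of the segment $pp'$, the shorter arc of $\partial C$ from $p'$ to $q'$, and the segment $q'q$; every point of this path lies in $V(C)$.

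The second gap is that your pivotal counting claim --- that each type-(b) bitangent (a common tangent of two obstacles $C',C''\neq C$) contributes a \emph{single} short edge of $\partial V(C)$ with both endpoints on type-(a) rays --- is asserted rather than proved, and it is exactly where the difficulty lives; you yourself flag it as resting on "a careful angular-coverage analysis" that is never carried out. As stated the claim is doubtful: along a type-(b) bitangent $\ell$ the set of points that marginally see $C$ along $\ell$ is indeed a single interval (once the line of sight along $\ell$ is blocked by a third obstacle it never reopens), but whether the \emph{dark} region abuts $\ell$ can toggle on and off along that interval, since a window past other obstacles can open and later close; so $\ell$ may carry several boundary edges, and their endpoints may lie on obstacle boundaries or on other type-(b) segments rather than on type-(a) rays. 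Without a genuine proof (say, a charging argument over free bitangents), your vertex count can degrade toward the $O(n^4)$ of the full bitangent arrangement, and the hole count inherits the same problem, since a hole could then be bounded entirely by type-(b) edges. For comparison, the paper takes a shorter route: it locates $\partial V(C)$ on the free-space boundary together with only the $O(n)$ common tangents between $C$ and the other polygons, after which both counts are immediate arrangement counts. Your classification of critical events is in some ways more scrupulous than the paper's, but the one step that would let you retain the $O(m+n^2)$ bound in the presence of $O(n^2)$ candidate critical lines is precisely the step you left unproven.
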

\begin{proof}
  Note that $C$ is always a hole of $V(C)$.
 The boundary of $V(C)$ is contained in the union of the boundary of the free space $R\setminus \bigcup_{C'\in \F} C'$ and all common inner and outer tangents between $C$ and other polygons in $\F$. Consequently, $V(C)$ is a polygon with $O(m+n^2)$ vertices.
 Since the polygons in $\F$ are disjoint, every hole of $V(C)$ has a vertex at the intersection of two tangents or one tangent and the boundary $\partial C'$, $C'\in \F\setminus \{C\}$. Now $O(n)$ tangents and $O(n)$ convex curves yield $O(n^2)$ intersections, and so the number of holes in $V(C)$ is also $O(n^2)$.

It remains to prove that $V(C)$ is connected. Let $p$ and $q$ be any two points in $V(C)$. By definition, there exist $p',q' \in \partial C$ such that $p'$ is visible from $p$, and $q'$ is visible from $q$.
Let $\rho(p',q')$ denote the shortest path connecting $p'$ and $q'$ on the boundary of $C$.
Then one can reach $q$ from $p$ via the $3$-leg path that connects $p$ to $p'$ via a straight-line segment,
follows $\rho(p',q')$ on $C$'s boundary and connects $q'$ to $q$ via a straight-line segment.
\end{proof}

Next we show that given a set $\F$ of convex polygons, one can determine in polynomial-time whether all polygons in $\F$ can be observed from a single point.
Note that for the variant with unlimited visibility, visibility regions may be unbounded. We need a simple lemma regarding zero-instances of $\tspn$.

\begin{lemma} \label{lem:tspn-zero}
  Given a family $\F$ of (possibly unbounded) polygonal regions with a total of $m$ vertices, one can determine whether there exists a point contained in all polygons in $\F$ 
  (\ie, whether $\opt_\tspn(\F) =0$) in time polynomial in $m$.
\end{lemma}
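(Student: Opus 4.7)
The plan is to reduce the decision problem to testing membership at representative points in the cells of an arrangement of the polygon boundaries.

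First, I would collect all edges of the boundaries $\partial P$, $P \in \F$: line segments for bounded polygons, together with rays or lines for unbounded ones. In total these form $O(m)$ features. Construct the arrangement $\mathcal{A}$ of these features by a standard sweep; it has $O(m^2)$ vertices, edges, and faces and can be built in $O(m^2 \log m)$ time.

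The key observation is that for every face $f$ of $\mathcal{A}$ and every $P \in \F$, either $f \subseteq P$ or $f \cap \mathring{P} = \emptyset$, because a face is a maximal connected region of $\mathbb{R}^2 \setminus \bigcup_{P\in \F} \partial P$ and hence cannot cross any $\partial P$. Consequently, $\bigcap_{P \in \F} P \neq \emptyset$ if and only if there exists a face (or an edge/vertex of $\mathcal{A}$, for the closed-set version) that is contained in every $P \in \F$. For each face I would pick a representative interior point $p_f$ and test $p_f \in P$ via a standard ray-casting point-in-polygon test in time linear in the number of edges of $P$. This is $O(m)$ per face and $O(m^3)$ in total, which is polynomial in $m$.

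The main subtlety, and the step I expect to be the principal obstacle, is the treatment of unbounded polygonal regions and the unbounded faces they induce. I would address this by enclosing all polygon vertices and all pairwise boundary-intersection points inside a sufficiently large axis-aligned box $B$: inside $B$ the algorithm above applies verbatim, while outside $B$ the arrangement degenerates into finitely many angular sectors determined by the rays leaving $B$, each of which can be classified with respect to every $P \in \F$ by testing a single point placed far along the sector's bisector. Equivalently, one may compactify the plane (by adding a point or a circle at infinity) to turn $\mathcal{A}$ into a finite arrangement on a compact surface and treat every face uniformly. Either route yields a polynomial-time decision procedure, establishing the lemma.
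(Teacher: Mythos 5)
Your proposal is correct and takes essentially the same approach as the paper: the paper's (very terse) proof simply reduces the question to emptiness of $\bigcap_{P \in \F} P$ and asserts that this intersection has complexity, and hence computation time, polynomial in $m$, and your arrangement of the $O(m)$ boundary features with representative-point membership tests is the standard concrete implementation of exactly that reduction. You in fact supply details the paper glosses over—testing edges and vertices of the arrangement to catch lower-dimensional intersections of the closed regions, and the bounding-box (or compactification) treatment of unbounded faces—so there is no gap.
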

\begin{proof}
  This is equivalent to determining whether the intersection $\bigcap_{P \in \F} P$ is empty. Since the total complexity of the visibility regions is polynomial in $m$, the complexity of the intersection is also polynomial in $n$. Consequently, the resulting algorithm takes time polynomial in $m$.
\end{proof}

Applying Lemma~\ref{lem:tspn-zero} to the visibility regions of the polygons in $\F$ immediately yields the following.

\begin{corollary} \label{cor:orp-zero}
  Given a set $\F$ of convex polygons with a total of $m$ vertices, one can determine whether $\F$ can be observed from a single point
  (\ie, whether $\opt_\orp(\F) =0$) in time polynomial in $m$.
\end{corollary}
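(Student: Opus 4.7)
The plan is to reduce the decision question to a nonemptiness test for an intersection of polygonal regions, so that Lemma~\ref{lem:tspn-zero} can be applied as a black box. The guiding observation is that a single point $p$ observes every polygon in $\F$ if and only if $p$ belongs to the visibility region of every $C\in \F$; equivalently,
\[
\opt_\orp(\F)=0 \ \iff\ \bigcap_{C\in \F} V(C)\neq \emptyset .
\]
Thus it suffices to compute the $V(C)$'s explicitly and test whether they have a common point.

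The first step is to construct each visibility region $V(C)$. By Lemma~\ref{lem:connected}, for any fixed bounding rectangle $R$ each $V(C)$ is a polygonal region with $O(m+n^2)$ vertices and $O(n^2)$ holes; it can be produced in polynomial time by intersecting the free space $R\setminus \bigcup_{C'\in\F}\mathring{C'}$ with the half-planes and tangent-induced cells that carve out what is visible from $C$. For unlimited vision one simply omits the clipping by $R$; the region is now possibly unbounded but is still a polygonal region of polynomial complexity, built from the same set of supporting lines and common tangents between pairs of polygons.

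The second step is to hand the family $\{V(C): C\in \F\}$ of (possibly unbounded) polygonal regions to Lemma~\ref{lem:tspn-zero}. Their total complexity is polynomial in $m$, so the lemma decides whether $\bigcap_{C\in\F} V(C)$ is empty in time polynomial in $m$. Combined with the reformulation above, this answers whether $\opt_\orp(\F)=0$.

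The only mild obstacle is the unlimited-vision case, because Lemma~\ref{lem:connected} is phrased inside a bounding rectangle $R$. However, Lemma~\ref{lem:tspn-zero} is stated for \emph{possibly unbounded} polygonal regions, and the combinatorial structure of $V(C)$ (vertices lying on $\partial C'$ or on tangents between polygons in $\F$) is unaffected by the absence of $R$, so the polynomial bound on its complexity persists. Hence a single polynomial-time algorithm handles both variants.
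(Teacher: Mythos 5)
Your proposal is correct and takes essentially the same route as the paper, which likewise derives the corollary by applying Lemma~\ref{lem:tspn-zero} to the visibility regions $V(C)$, whose polynomial complexity is supplied by Lemma~\ref{lem:connected}. Your explicit handling of the unlimited-vision case merely spells out what the paper notes just before Lemma~\ref{lem:tspn-zero}, namely that the lemma is stated for possibly unbounded polygonal regions precisely to cover that variant.
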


Lemma~\ref{lem:far} below shows that optimizing the length of the route is sometimes impractical as it
may produce routes that are arbitrarily far from the observed objects. However, we can enforce routes in the
near vicinity of the family to be observed by constraining the observation tour to lie in a bounding box of the
family (\eg, an axis-parallel rectangle).

\begin{lemma} \label{lem:far}
  For every $\Delta>0$, there exists a configuration $\F=\F(\Delta)$ of $O(1)$ axis-parallel unit squares such that:
  {\rm (i)}~$\diam(\conv(\F))=O(1)$,
  {\rm (ii)}~$\opt_\orp(\F) =0$, \ie, the configuration can be observed from a single point, and
  {\rm (iii)}~every single observation point is at a distance at least $\Delta$ from $\conv(\F)$.
  Alternatively, $\F$ can be realized from unit disks.
\end{lemma}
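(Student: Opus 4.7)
The plan is to construct an explicit family $\F(\Delta)$ of a small constant number of axis-parallel unit squares satisfying the three stated properties. The construction parameterizes the squares' positions by $\delta=\Theta(1/\Delta)$, placing them in a bounded region of diameter independent of $\Delta$. The key geometric mechanism is a narrow visibility ``slit'' of width $\delta$: a pair of nearly-touching unit squares forms a slit through which a designated target square is visible only along a cone of angular width $\Theta(\delta)$. By arranging several such slit pairs together with target squares positioned behind them, the common intersection of their visibility regions is forced to lie at distance $\Omega(1/\delta)=\Omega(\Delta)$ from $\conv(\F)$.

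The proof would proceed in three steps. First, I define the positions of the unit squares explicitly and verify (i) by directly computing $\diam(\conv(\F))$ as a bounded quantity independent of $\Delta$. Second, for (ii), I exhibit a specific point $p^*$ at distance $\Theta(\Delta)$ from $\conv(\F)$, placed along the common axis through the slits, and verify by tracing straight-line sightlines (through the apertures or directly) that $p^*$ sees every square of the family. Third, for (iii), I argue that any observer at distance less than $\Delta$ from $\conv(\F)$ must fail to see at least one square; this is done by a case analysis that partitions the near region around $\conv(\F)$ into zones and identifies, for each zone, a specific square occluded from any observer in it.

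The hard part will be step three. The argument leverages angular analysis of the slits: through a slit of width $\delta$, an observer at distance $d$ sees a cone of angular width $\Theta(\delta/d)$. The requirement that multiple slits be simultaneously traversed by the observer's lines of sight to their respective targets forces the observer to lie near the common axis of the slits at distance $\Omega(1/\delta)=\Omega(\Delta)$. Handling all possible observer positions in the $\Delta$-neighborhood of $\conv(\F)$ requires careful bookkeeping of which square is blocked in each zone. For the alternative realization with unit disks, we replace each axis-parallel unit square by a unit disk with the same center; the visibility relations transfer with minor adjustments to the slit geometry, and the same construction satisfies properties (i)--(iii).
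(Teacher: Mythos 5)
Your proposal is correct and takes essentially the same approach as the paper: the paper's construction is six unit squares in two staggered chains with gaps of width $\eps=\eps(\Delta)$, and the nearly-vertical common internal tangents through these slits play exactly the role of your $\Theta(\delta)$-width visibility cones, forcing every single observation point into a wedge intersection at distance $\Omega(1/\eps)=\Omega(\Delta)$, while a single far-away point still sees all squares through the slits. The disk variant is also handled analogously (the paper shrinks disks of a hexagonal-packing piece, matching your same-center replacement).
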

\begin{proof}
  We exhibit and analyze a configuration (family $\F$) of six axis-parallel unit squares; refer to Fig.~\ref{fig:6squares}.
  An analogous unit disk configuration (with six elements) can be derived from
  a piece of the hexagonal disk packing by slightly shrinking each disk from its center;
  its analysis is left to the reader.

\begin{figure}[ht]
\centering
\includegraphics[scale=0.5]{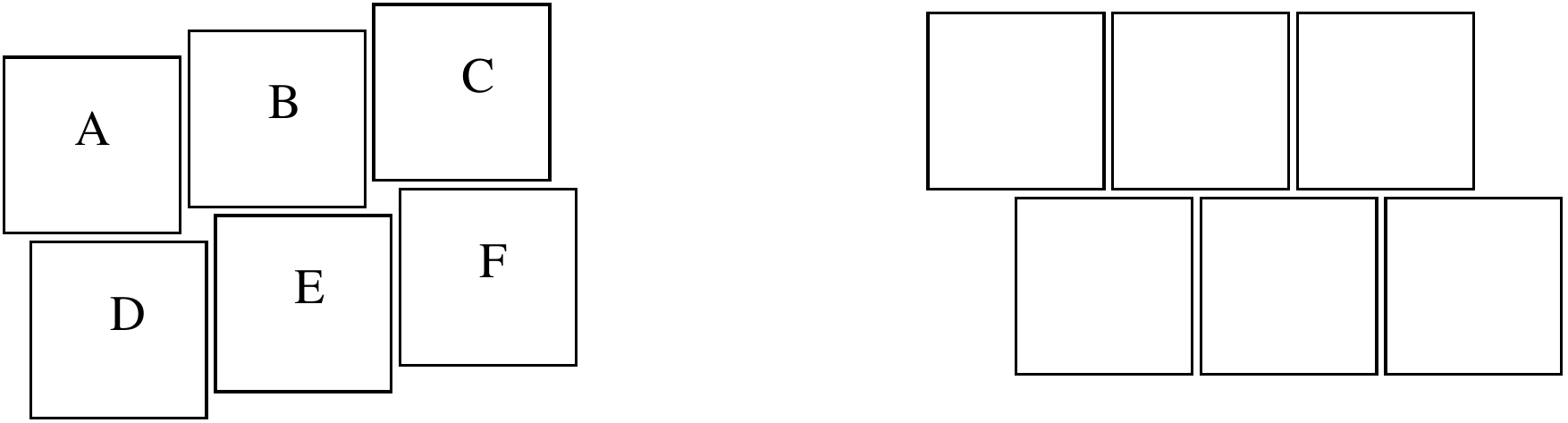}
\caption{Left: This family can be only observed from single points far away up or down.
  Right: This family can be observed from any single point on a horizontal line that separates
  the upper chain of squares from the lower one.}
\label{fig:6squares}
\end{figure}

Let $\eps>0$ be sufficiently small, as specified below.
Successive squares (from left to right) are horizontally separated by $\eps$ and shifted vertically by $2\eps$.
Let $\ell$ be a horizontal line separating the squares $B$ and $E$. Without loss of generality
let $p$ be an observation point on or above $\ell$. If $p$ lies inside $\conv(\F)$, depending on its position,
either $C$ or $D$ is not observable from $p$ (if $\eps>0$ is sufficiently small). Suppose that $p$ lies
outside $\conv(\F)$, and on or to the right of the vertical axis of symmetry of $B$.
Then $D$ is not observable from $p$ unless $p$ is above the common internal tangent to $A$ and $B$ of positive slope
and above the common internal tangent to $B$ and $C$ of negative slope. These tangents continuously depend on $\eps$
and are almost vertical as $\eps$ tends to zero, hence the lowest point in the intersection of the corresponding
halfplanes can be arbitrarily high, as claimed. The case when $p$ lies outside $\conv(\F)$ on or to the left of
the vertical axis of symmetry of $B$ is similar. It is clear that one can always find a suitable $\eps=\eps(\Delta)$,
as required.
\end{proof}

\medskip
The following lemma relates fatness to geometric dilation for closed curves:

\begin{lemma} \label{lem:dilation}
  Let $C$ be a $\lambda$-fat convex curve. Then $\delta(C) \leq \min(\pi \lambda^{-1}, 2(\lambda^{-1} +1))$.
\end{lemma}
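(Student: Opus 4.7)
My plan is to combine two complementary bounds on the shorter boundary arc $\gamma_1$ between any two points $p, q \in C$, and then do a case analysis on the chord length $\ell = |pq|$ relative to the width $w$. Write $d = \diam(C)$, so that $\lambda$-fatness reads $w \geq \lambda d$.

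First I set up two inequalities. The \emph{perimeter bound} is $|\gamma_1| \leq L/2 \leq \pi d/2$, where the first inequality uses that $\gamma_1$ is the shorter of the two arcs, and the second is Cauchy's classical inequality $L \leq \pi \diam(K)$ applied to the convex body $K$ bounded by $C$. The \emph{stretch bound} is $|\gamma_1| \leq \ell + 2h$, where $h$ is the maximum perpendicular distance from $\gamma_1$ to the chord $pq$. To see this, observe that the convex region $R$ bounded by $\gamma_1$ and the chord has extent exactly $\ell$ along the chord direction (the interior angle of $R$ at each of $p, q$ lies on the chord side, so $R$ cannot extend past $p$ or $q$ in the chord direction) and extent at most $h$ perpendicular to it, hence $R$ is inscribed in an $\ell \times h$ rectangle; by monotonicity of perimeter under convex inclusion, $|\gamma_1| + \ell = \per(R) \leq 2(\ell+h)$.

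With these in hand, both claimed bounds follow in their respective ``large chord'' regimes. When $\ell \geq w/2$, the perimeter bound gives
\[ d_C(p,q)/\ell \leq (\pi d/2)/(w/2) = \pi d/w = \pi/\lambda, \]
establishing $\delta(C) \leq \pi/\lambda$. When $\ell \geq w$, the stretch bound combined with the trivial estimate $h \leq d$ gives
\[ d_C(p,q)/\ell \leq 1 + 2d/\ell \leq 1 + 2/\lambda \leq 2(1/\lambda+1), \]
establishing $\delta(C) \leq 2(1/\lambda+1)$.

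The main obstacle is the complementary regime of short chords ($\ell < w/2$ for the first bound and $\ell < w$ for the second), where the naive inequalities above do not immediately give the claimed ratios. I plan to close this gap via a refined analysis: at an interior extremum of the dilation ratio, first-order optimality forces the two perpendicular heights $h_1, h_2$ of the two arcs on opposite sides of the chord to be equal, so that $h = h_1 \leq (h_1+h_2)/2 \leq d/2$ and the stretch bound upgrades to $|\gamma_1| \leq \ell + d$; for extrema at non-smooth points (such as close-to-corner pairs on a convex polygon), I use that $\lambda$-fatness precludes arbitrarily sharp angular features, so the ratio at such pairs is controlled by $1/\sin(\theta_{\min}/2)$ for the minimum interior angle $\theta_{\min}$, which is in turn governed by $\lambda$. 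Combining these refinements with the two basic bounds above yields the desired inequalities uniformly over all $p, q \in C$.
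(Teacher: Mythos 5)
Your large-chord analysis is fine: for $\ell \geq w/2$ the perimeter bound $d_\gamma(p,q)\leq L/2\leq \pi d/2$ indeed gives the ratio $\pi/\lambda$, and this matches what the paper does once it has reduced to $\delta(C)\leq |C|/w$. But two things go wrong. First, your ``stretch bound'' is false as stated, and so is its justification: the convex piece $R$ cut off by the chord can extend past $p$ and $q$ in the chord direction (the interior angles of $R$ at $p,q$ need not lie on the chord side), so $R$ is not inscribed in an $\ell\times h$ rectangle. Concretely, take $K=[0,10]\times[0,\eps]$, $p=(5,0)$, $q=(5.1,\eps)$: the \emph{shorter} arc has length $\approx 9.9$, while $\ell\approx 0.1$ and $h=O(\eps)$, so $|\gamma_1|\leq \ell+2h$ fails badly. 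Your application of it with $h\leq d$ happens to be salvageable ($|\gamma_1|\leq L/2\leq \pi d/2\leq \ell+2d$ trivially), but note that in the regime $\ell\geq w$ the perimeter bound alone already gives the sharper $\pi d/(2\ell)\leq \pi/(2\lambda)\leq 2(\lambda^{-1}+1)$, so the stretch bound buys you nothing.

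The genuine gap is the short-chord regime $\ell< w/2$, which your plan never closes. Even granting your upgraded estimate $|\gamma_1|\leq \ell+d$, the resulting ratio $1+d/\ell$ diverges as $\ell\to 0$, so the equal-heights optimality argument cannot finish the job; moreover, that argument presupposes the supremum is attained at a smooth interior configuration, which need not hold for convex curves (they need not be $C^1$, and the supremum can be approached by degenerating pairs). Your vertex claim is true in spirit --- a vertex of interior angle $\theta$ forces $C$ into a sector, whence $w\leq d\sin\theta$ and $\sin\theta\geq\lambda$ --- but it is unproven, and you never establish that the supremum of the dilation ratio is attained either at such a vertex pair or at your interior extremum. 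The missing ingredient is exactly the structural fact the paper cites: for a closed convex curve, $\delta(C)=|C|/(2h)$ where $h$ is the \emph{minimum halving distance} (Ebbers-Baumann, Gr\"une, and Klein, Lemma~11 in the paper's reference \cite{EGK07}), together with $h\geq w/2$ \cite{DEG+07}. This says the dilation is realized by halving pairs, which eliminates short chords entirely and yields $\delta(C)\leq |C|/w$; then $|C|\leq \pi d$ and $|C|\leq 2(d+w)$ give both claimed bounds by your own large-chord computation. Either cite these results or prove an equivalent reduction to halving pairs; as written, the case that carries all the difficulty is open.
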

\begin{proof}
  If $C$ is a convex curve, it is known~\cite[Lemma~11]{EGK07} that $\delta(C) = \frac{|C|}{2h}$.
  It is also known~\cite[Thm.~8]{DEG+07} that $h \geq w/2$, where $h=h(C)$
  is the \emph{minimum halving distance} of $C$ (\ie, the minimum distance between two points on $C$
  that divide the length of $C$ in two equal parts), and $w=w(C)$ is the width of $C$.
  Putting these together one deduces that $\delta(C) \leq \frac{|C|}{w}$.
  Let $D$ denote the diameter of $C$.
  The isoperimetric inequality $|C| \leq D \pi$ and the obvious inequality $|C| \leq 2D + 2w$ lead to the following
  dilation bounds $\delta(C) \leq \pi \frac{D}{w}$ and $\delta(C) \leq 2\left(\frac{D}{w}+1\right)$,
  see also~\cite{DEG+07,SA00}. Since $C$ is $\lambda$-fat, direct substitution yields the two bounds given in the lemma.
Note that the latter bound is better for small $\lambda$.
\end{proof}

\section{Fat convex polygons}  \label{sec:approx}

In this section we prove Theorem~\ref{thm:approx}. The following algorithm computes a tour for a family $\F=\{C_1,\ldots , C_n\}$ of convex polygons in a rectangle $R$.

\bigskip
\noindent{\bf Algorithm 1.}
\begin{itemize} \itemsep 1pt
\item[] {\sc Step 1:}  Compute the visibility regions $V_i=V(C_i)$, $i=1,\ldots,n$.
\item[] {\sc Step 2:}  Use the $\tspn$ algorithm for connected regions as a subroutine (from~\cite{MM95},\cite{GL99}, or~\cite{EFS09}, as explained in Section~\ref{sec:intro})   to obtain a $O(\log{n})$  approximation for a tour $T$ that visits all $V_i$, $i=1,\ldots,n$. \item [] {\sc Step 3:} Locally transform $T$ by making detours that avoid the elements $C_i \in \F$ that are crossed by   $T$, if any. Specifically, for each $C \in \F$ that is intersected by $T$, replace the subpath $\varrho = C \cap T$ by the shortest path along $\partial C$ connecting the start and end points of $\varrho$; as shown in Fig.~\ref{fig:detour}.  Output the resulting tour $T'$.
\end{itemize}

\begin{figure}[ht]
\centering
\includegraphics[scale=0.8]{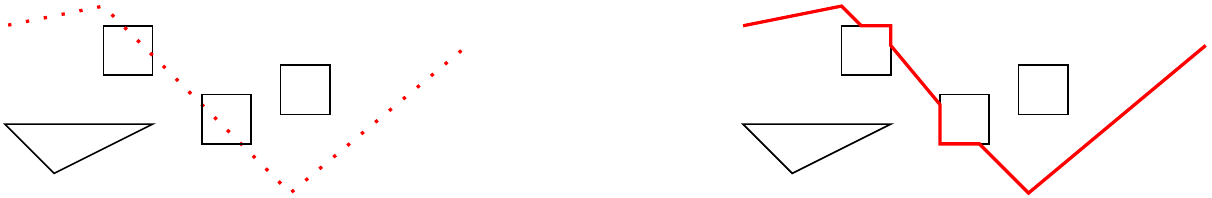}
\caption{Local replacements to obtain $T'$ from $T$.}
\label{fig:detour}
\end{figure}

\old{
\bigskip
\noindent{\bf Algorithm 1.}

 \noindent {\sc Step 1:}  Compute the visibility regions $V_i=V(C_i)$,  for all $i=1,\ldots,n$.

 \noindent {\sc Step 2:}  Use the $\tspn$ algorithm for connected regions as a subroutine
  (from~\cite{MM95}, \cite{GL99}, or~\cite{EFS09}, as explained in Section~\ref{sec:intro})
  to obtain a tour $T$ that visits $V_i$ for all $i=1,\ldots,n$, such that $\len(T)$ is $O(\log{n})$ times the minimum length of such a tour.

 \noindent {\sc Step 3:} Locally transform $T$ by making detours that avoid the elements $C_i \in \F$ that are crossed by $T$, if any. Specifically, for each $C \in \F$ that is intersected by $T$, replace the subpath $\varrho = C \cap T$ by the
  shortest path along $\partial C$ connecting the start and end points of $\varrho$; as shown in Fig.~\ref{fig:detour}.
  Output the resulting tour $T'$.
} 

\paragraph{Algorithm analysis.}
By Lemma~\ref{lem:connected}, all visibility regions $V_i=V(C_i)$ are connected, and so
$\F$ represents a valid input for the $\tspn$ algorithm.
Recall that the tour $T$ returned by the $\tspn$ algorithm visits all visibility regions; this means that each $C_i$ is seen from some point in $T\setminus(\bigcup_{i=1}^n C_i)$.
The local replacements in {\sc Step~3} ensure that the resulting tour $T'$ does not intersect (the interior of) any obstacle, and maintain the property that the tour visits all visibility regions; \ie, each $C_i$ is seen from some point in $T' \setminus (\bigcup_{i=1}^n C_i)$.
Consequently, $T'$ is an observation route for $\F$.
Since each region $V_i$ has polynomial complexity by Lemma~\ref{lem:connected} (\ie, polynomial in the total number of vertices of the polygons in $\F$),  Algorithm~1 runs in polynomial time.

It remains to bound $\len(T')$ from above. Let $\V=\{V_1,\ldots,V_n\}$.
An observation route for $\F$ must visit the visibility regions $V_i$ for all $i=1,\ldots , n$, and so $\opt_\tspn(\V) \leq \opt_\orp(\F)$, which implies
\[ \len(T)\leq O(\opt_\tspn(\V) \, \log n) \leq O(\opt_\orp(\F)\, \log n). \]
Recall that all $C_i$ are fat and thus by Lemma~\ref{lem:dilation}, the local replacements in {\sc Step~3} increase the length by at most a constant factor (that depends on $\lambda =\Omega(1)$), that is, $\len(T')\leq O(\len(T))$.
This concludes the proof of Theorem~\ref{thm:approx}.

\subsection{Translates of a fat convex polygon}  \label{subsec:translates}

In this subsection we restrict ourselves to families of translates of a fat convex body $C$.
The visibility regions associated to the elements of $\F$ are then restricted and this
allows for a simplified input for the algorithm. Specifically, Corollary~\ref{cor:translate} below shows that each visibility region is a polygon with a unique hole (the convex body itself).  The approximation ratio remains the same.

\begin{lemma} \label{lem:cone}
  Let $C$ be a convex body and let $op$ and $oq$ be the two tangent rays incident to $o$ and $C$ so that $\angle{poq}$
  is oriented counterclockwise. Let $C'$ be a translate  of $C$ further away from $o$ that is tangent to $op$ at $p'$: that is,
  $C'= C + \overrightarrow{pp'}$. Then the ray $oq$ does not intersect~$C'$.
\end{lemma}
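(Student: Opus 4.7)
The plan is to set up a coordinate system with $o$ at the origin and the tangent ray $op$ along the positive $x$-axis. Then the other tangent ray $oq$ has direction $(\cos\theta,\sin\theta)$ for some $\theta\in(0,\pi)$: we have $\theta>0$ because the two tangent rays from an external point to a convex body with nonempty interior are distinct, and $\theta<\pi$ because $o$ lies outside $C$. In these coordinates $\overrightarrow{pp'}=(\Delta,0)$ for some $\Delta>0$, since $p'$ lies strictly further from $o$ than $p$ on the ray $op$.

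The key geometric fact I would invoke is that $C$ is contained in the closed convex cone spanned by its two tangent rays from $o$. Indeed, since $op$ and $oq$ are tangent to $C$, the lines through them are supporting lines, so $C$ lies in the closed half-plane bounded by line $op$ that contains $q$, and in the closed half-plane $H$ bounded by line $oq$ that contains $p$; the intersection of these two half-planes is precisely the cone. In our coordinates the second half-plane reads
\begin{equation*}
H \;=\; \bigl\{(x,y)\,:\, \sin\theta\cdot x - \cos\theta\cdot y \geq 0\bigr\},
\end{equation*}
as a direct check using the direction $(\cos\theta,\sin\theta)$ of line $oq$ confirms (and the point $p$ lies in $H$).

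With this in hand, the conclusion is a one-line calculation. For any $(x,y)\in C'$, the pre-image $(x-\Delta,y)$ lies in $C\subseteq H$, hence
\begin{equation*}
\sin\theta\cdot(x-\Delta) - \cos\theta\cdot y \;\geq\; 0,
\end{equation*}
which rearranges to $\sin\theta\cdot x - \cos\theta\cdot y \geq \Delta\sin\theta > 0$. So every point of $C'$ is strictly separated from line $oq$ by a positive margin; in particular $C'$ does not meet the ray $oq$.

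The main (and only) delicate point is the geometric setup itself: one has to justify that $C$ lies in the angular sector between $op$ and $oq$ rather than in its reflex complement, and that $H$ is picked with the correct sign. Both facts follow at once from the observation that each of $op,\,oq$ is a supporting ray at a boundary point of the convex body $C$, so that $C$ lies on the side of each tangent line toward the other tangent ray. Once the sector containment is stated precisely, the translation argument requires no further geometric work beyond the inequality above.
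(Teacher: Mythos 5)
Your proof is correct, and it takes a genuinely different route from the paper's. The paper argues \emph{kinematically}: it places $op$ on a horizontal line, assumes without loss of generality that $\angle poq$ is acute (declaring the complementary case obvious), and considers the continuous motion $C + t\,\overrightarrow{pp'}$, observing that the contact with the ray $oq$ disappears the instant the motion starts, so the moving body remains strictly below the ray. You replace this dynamic sketch with a \emph{static} separation computation: since the line through $oq$ supports $C$, you have $C \subseteq H=\{(x,y): \sin\theta\, x - \cos\theta\, y \geq 0\}$, and translating by $(\Delta,0)$ with $\Delta>0$ strengthens the inequality to $\sin\theta\, x - \cos\theta\, y \geq \Delta\sin\theta>0$ on all of $C'$. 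This buys two things the paper's version lacks: there is no case split on the size of $\angle poq$ (you only need $\theta\in(0,\pi)$, i.e., $\sin\theta>0$, which you correctly derive from $C$ having nonempty interior and from compactness of $C$ with $o\notin C$), and the step the paper leaves implicit --- why the body ``remains strictly below'' the ray throughout the motion --- is made fully rigorous. You also correctly identify and discharge the one delicate point, namely choosing the correct side of line $oq$: since $p\in C$ and $\sin\theta\cdot p_x>0$ strictly (as $p\neq o$ and $\sin\theta>0$), the supporting half-plane of $C$ along line $oq$ is indeed $H$. Incidentally, your argument proves slightly more than stated --- $C'$ avoids the entire line through $o$ and $q$, not merely the ray --- which is harmless and arguably cleaner.
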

\begin{proof}
  Assume for concreteness that $op$ is a horizontal line, and so the points $o,p,p'$ appear on the line in this order.
  Refer to Fig.~\ref{fig:cone}.
\begin{figure}[htbp]
\centering
\includegraphics[scale=0.8]{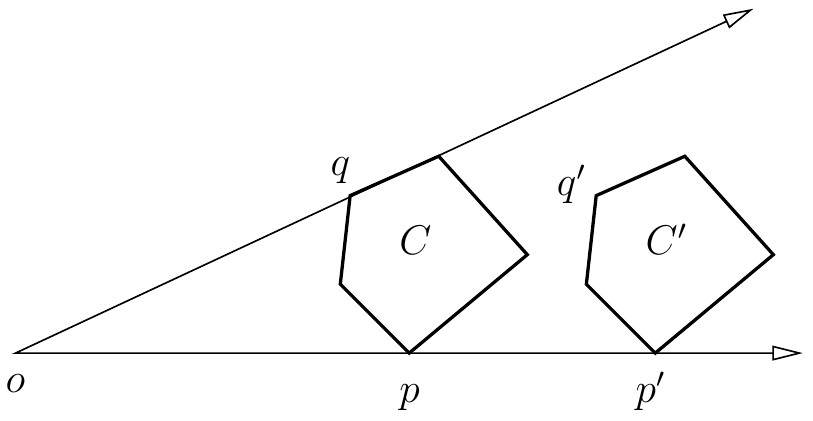}
\caption{Two translates of a convex polygon in a cone.}
\label{fig:cone}
\end{figure}

We may assume that $\angle{poq}$ is acute (since otherwise the claim is obvious).
Recall that $C'= C + \overrightarrow{pp'}$, and consider the following continuous motion:
$C$ moves right and so the contact between $C$ and the ray $oq$ disappears right after the start of the motion.
Thus $C$ remains strictly below this ray throughout the translation and the lemma follows.
\end{proof}

\begin{lemma}\label{lem:ray}
For every $C\in \F$ and every point $o\notin C\cup V(C)$, there exists a ray $\vec{\rho}$ emanating from $o$ such that $\vec{\rho}$ is disjoint from $C\cup V(C)$.
\end{lemma}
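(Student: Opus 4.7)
The plan is to select a direction $\vec{d}$ from $o$ along which the ray $\vec{\rho}$ remains in the complement of $V(C) \cup C$, by exploiting the translate structure of $\F$. Since $o \notin V(C)$, each tangent ray from $o$ to $C$ (touching $\partial C$ at $p_1$ or $p_2$) is blocked: its open segment up to the tangent point meets the interior of some translate in $\F \setminus \{C\}$. Let $C_1 \in \F \setminus \{C\}$ be such a blocker along the ray $op_1$, with $C_1 = C + \vec{v}$ for some translation vector $\vec{v}$. A key property of translates of a convex body is that the two common outer tangent lines of $C$ and $C_1$ are both parallel to $\vec{v}$: each tangent line touches $\partial C$ at a point whose translate under $\vec{v}$ lies on $\partial C_1$ with the same outward normal, forcing the tangent line to be parallel to $\vec{v}$. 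These two tangents bound the ``shadow strip'' of $C_1$ from $C$, i.e., the region from which $C_1$ blocks every line of sight to $C$.

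In the simpler case, $C_1$ alone suffices to block $o$'s view of $C$; equivalently, the cone from $o$ tangent to $C$ is contained in the cone from $o$ tangent to $C_1$. Then $o$ lies in this shadow strip, and I choose $\vec{d}$ parallel to $\vec{v}$ pointing away from $C$, so that $\vec{\rho}$ remains in the strip. To verify disjointness from $V(C) \cup C$: (i) the direction $\vec{d}$ lies outside the cone from $o$ to $C$, so $\vec{\rho}$ avoids $C$; and (ii) at any point $r$ on $\vec{\rho}$, the parallelism of the common outer tangents implies that the cone from $r$ tangent to $C_1$ still contains $C$, so $r$ cannot see $C$ and hence $r \notin V(C)$.

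The harder case is when no single translate blocks all of $o$'s cone to $C$, but a collection $\{C_1, \ldots, C_k\} \subseteq \F \setminus \{C\}$ jointly does so. The combined shadow is a more intricate region, bounded by several common outer tangents (one pair per blocker), and $o$ lies in this combined shadow without lying in any single shadow strip. The strategy is to identify a direction $\vec{d}$ along which the combined shadow extends to infinity (or to $\partial R$ in the limited-vision setting). Here Lemma~\ref{lem:cone} is instrumental: applied with $o$, a blocker $C_i$, and its tangent rays, it restricts where other translates may sit relative to the tangent rays from $o$, so the combined constraints carve out at least one direction in which the shadow region stretches outward unobstructed.

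The main technical obstacle is this last step, namely analyzing the combined shadow to identify a suitable $\vec{d}$. I expect this to require a case analysis on the angular arrangement of the blockers $C_i$ around $o$, combined with an extremal argument that selects a ``deepest'' direction—for instance, the direction aligned with the translation vector of an extremal blocker—and then verifies, via the parallel-tangent property and Lemma~\ref{lem:cone}, that no point on the ray from $o$ in this direction can acquire an unobstructed sight line to $\partial C$ without contradicting the fact that $o$ already failed to do so.
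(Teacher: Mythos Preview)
Your two-case split and your handling of the single-blocker case are essentially the paper's: when one translate $C_1$ intersects both tangent segments $op$ and $oq$, take $\vec{\rho}$ parallel to the translation vector $\vec{v}$ and use the parallel strip bounded by the common external tangents of $C$ and $C_1$, inside which $C_1$ separates $\vec{\rho}$ from $C$.

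The gap is the multi-blocker case, which you leave as a plan. Your tentative direction---the translation vector of a single extremal blocker---does not suffice: such a blocker meets only one of $op,oq$, so its shadow strip shields only one side of the view cone to $C$, and points far along $\vec{\rho}$ could still see the other side of $\partial C$. The paper's resolution uses \emph{two} extremal blockers simultaneously. Let $C_p$ be the translate intersecting $op$ whose second tangent ray $oq_p$ from $o$ maximizes $\angle poq_p$, and symmetrically let $C_q$ (intersecting $oq$) maximize $\angle p_qoq$. Since $o\notin V(C)$, these two tangent rays overshoot each other, leaving a positive angle $\angle q_p o p_q$; take $\vec{\rho}$ in the direction \emph{opposite} to the bisector of that angle. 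After rotating so $\vec{\rho}$ is the negative $x$-axis, the horizontal strip containing $C_p$ also contains the $x$-axis, and within it $C_p$ separates $\vec{\rho}$ from the portion of $C$ below the axis; symmetrically $C_q$ handles the portion above. Hence every point of $\vec{\rho}$ is blocked from all of $\partial C$. The missing idea in your outline is exactly this: one blocker per side, with the ray direction chosen so that both separating strips contain~$\vec{\rho}$; a single translation-vector direction cannot do this job in Case~2.
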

 \begin{proof}
Let $op$ and $oq$ be the two tangents to $C$ emanating from $o$ such that the angle $\angle poq$ is oriented counterclockwise. See Fig.~\ref{fig:OneHole}~(right) where the angle is shaded.
By Lemma~\ref{lem:cone}, if a convex body $C'\in \F$, $C' \neq C$, intersects the triangle $\Delta{poq}$, then $C'$ intersects at least one of the line segments $op$ or $oq$. We distinguish between two cases.

\begin{figure}[htbp]
\centering
\includegraphics[scale=0.8]{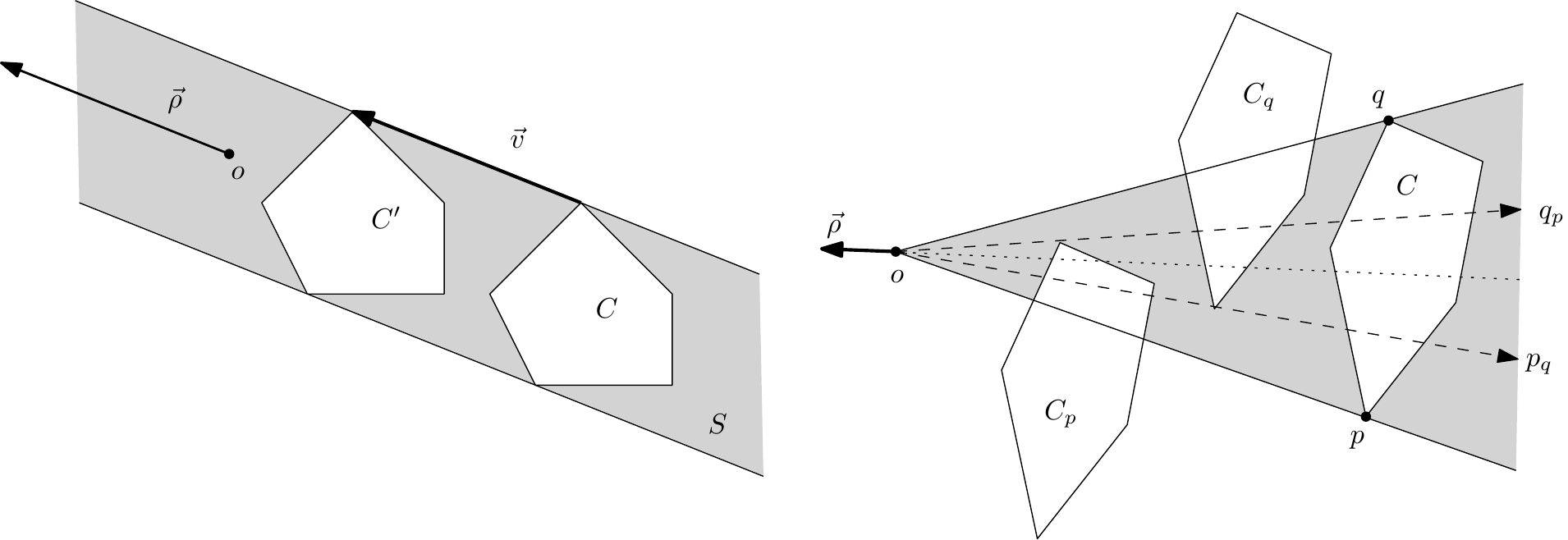}
\caption{Illustrations for Lemma~\ref{lem:ray}: Two possible choices for ray $\vec{\rho}$.}
\label{fig:OneHole}
\end{figure}

\noindent\textbf{Case~1: There exists a convex body $C'\in \F$, $C'\neq C$, that intersects both $op$ and $oq$.} Refer to Fig.~\ref{fig:OneHole}~(left). Since $C'\in \F$, then $C'=C+\vec{v}$ for some  vector $\vec{v}\in \mathbb{R}^2$. Both common external tangents of $C$ and $C'$ are parallel to $\vec{v}$, and so they bound a parallel strip denoted by $S$. Let $\vec{\rho}$ be a ray emanating from $o$ in direction $\vec{v}$. Then $\vec{\rho}$ lies in $S$, and $C'$ separates $C$ from $\vec{\rho}$ within $S$. Consequently, $\vec{\rho}$ is disjoint from $C\cup V(C)$.

\smallskip
\noindent\textbf{Case~2: There is no convex body $C'\in \F$, $C'\neq C$, that intersects both $op$ and $oq$.} See Fig.~\ref{fig:OneHole}~(right). Since $o\notin V(C)$, then $op$ intersects the interior of some body in $\F$. Let $C_p$ be the translate in $\F$ that intersects $op$ and maximizes the angle $\angle poq_p$, where $oq_p$ is a tangent ray to $C_p$.  Similarly, let $C_q$ be the translate in $\F$ that intersects $oq$ and maximizes the angle $\angle p_qoq$, where $op_q$ is a tangent ray to $C_q$.  Since $o\notin V(C)$, and for every $c\in \partial C$, the segment $oc$ intersects the interior of some convex body in $\F$ that also intersects $op$ or $oq$, then the clockwise angle $\angle q_p o p_q$ is nonzero. Let $\vec{u}$ be the direction vector of the angle bisector of $\angle q_p o p_q$, and let $\vec{\rho}$ be a ray emanating from $o$ in direction $-\vec{u}$.

It remains to show that $\vec{\rho}$ is disjoint from $C\cup V(C)$. Assume w.l.o.g.\ that $o$ is the origin and $\vec{\rho}$ is the negative $x$-axis. Let $r$ be any point on $\vec{\rho}$. Consider the horizontal strip $S_p$ bounded by the two horizontal tangent lines to $C_p$. Note that this strip contains the $x$-axis, and $C_p$ separates $\vec{\rho}$ from $C$ within $S_p$. Consequently any line segment between $r$ and a point $c\in \partial C$ below the $x$-axis intersects the interior of $C_p$. Similarly, any line segment between $r$ and a point $c\in \partial C$ above the $x$-axis intersects the interior of $C_q$. Overall, no point in $\vec{\rho}$ can see any point on $\partial C$. This implies that $\vec{\rho}$ is disjoint from $C\cup V(C)$.
\end{proof}

\begin{corollary} \label{cor:translate}
  For every $C \in \F$, the visibility region $V(C)$ is a polygon with exactly one hole, namely~$C$.
\end{corollary}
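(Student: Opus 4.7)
The plan is to piggy-back on Lemma~\ref{lem:ray}, which is designed precisely to rule out extra holes. The corollary has two pieces to verify: (a) $C$ itself is a hole of $V(C)$, and (b) there are no other holes. Lemma~\ref{lem:connected} already guarantees that $V(C)$ is a polygon with polynomially many vertices in the bounding rectangle $R$, so the hole count is well-defined and these are the only things to check.

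Piece (a) is immediate. Any point $p\in\mathring C$ has the property that every segment $pq$ with $q\in\partial C$ lies entirely in $C$ by convexity, with the half-open portion in $\mathring C$; therefore $pq$ meets $\mathring C$, so $p\notin V(C)$. Conversely, each $q\in\partial C$ can see itself, so $\partial C\subseteq V(C)$. Thus $\mathring C$ is a bounded connected component of $R\setminus V(C)$ enclosed by $\partial C$, i.e.\ a hole.

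For piece (b), I would take an arbitrary point $o\in R\setminus V(C)$ lying outside $\mathring C$. Then $o\notin C\cup V(C)$, and Lemma~\ref{lem:ray} yields a ray $\vec{\rho}$ emanating from $o$ that is disjoint from $C\cup V(C)$. Clipping $\vec{\rho}$ against $R$ produces a curve in $R\setminus V(C)$ that connects $o$ to $\partial R$, which shows that $o$ lies in the connected component of $R\setminus V(C)$ that touches the outer boundary of $R$ and therefore, by definition, not in a hole of the polygon $V(C)$. Combining (a) and (b), $\mathring C$ is the unique hole. I do not anticipate any real obstacle: Lemma~\ref{lem:ray} does all the geometric work (it is the substantive statement for translates, relying on the parallel-strip argument in Case~1 and the cone/angle-bisector argument in Case~2), while the present corollary is a short topological packaging of that result.
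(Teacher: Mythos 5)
Your proposal is correct and follows essentially the same route as the paper: the easy observation that $\mathring{C}$ is a hole, plus an application of Lemma~\ref{lem:ray} to show any other complement point escapes to the outer region, so no second hole can exist. The only cosmetic difference is that the paper phrases this as a contradiction (a putative second hole $H$ would contain an entire ray and hence be unbounded), whereas you argue directly by clipping the ray against $R$ to connect the point to $\partial R$ --- the same geometric content either way.
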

\begin{proof}
    It is clear that $C\in \F$ is a hole in $V(C)$, since every point in a small neighborhood of $\partial C$ is in $V(C)$, but points in $C$ are not in $V(C)$. Suppose, for contradiction, that $V(C)$ has another hole $H$, i.e., a bounded connected component of $\mathbb{R}^2\setminus V(C)$. Let $p$ be an arbitrary point in the interior of $H$. By Lemma~\ref{lem:ray}, there is a ray $\overrightarrow{r}$ emanating from $p$ that lies entirely in the exterior of $V(C)$, hence in $H$. Consequently, $H$ is unbounded, which is a contradiction.
\end{proof}

See Figure~\ref{fig:translates} for an example with a family of translates.

\begin{figure}[ht]
\centering
\includegraphics[scale=0.6]{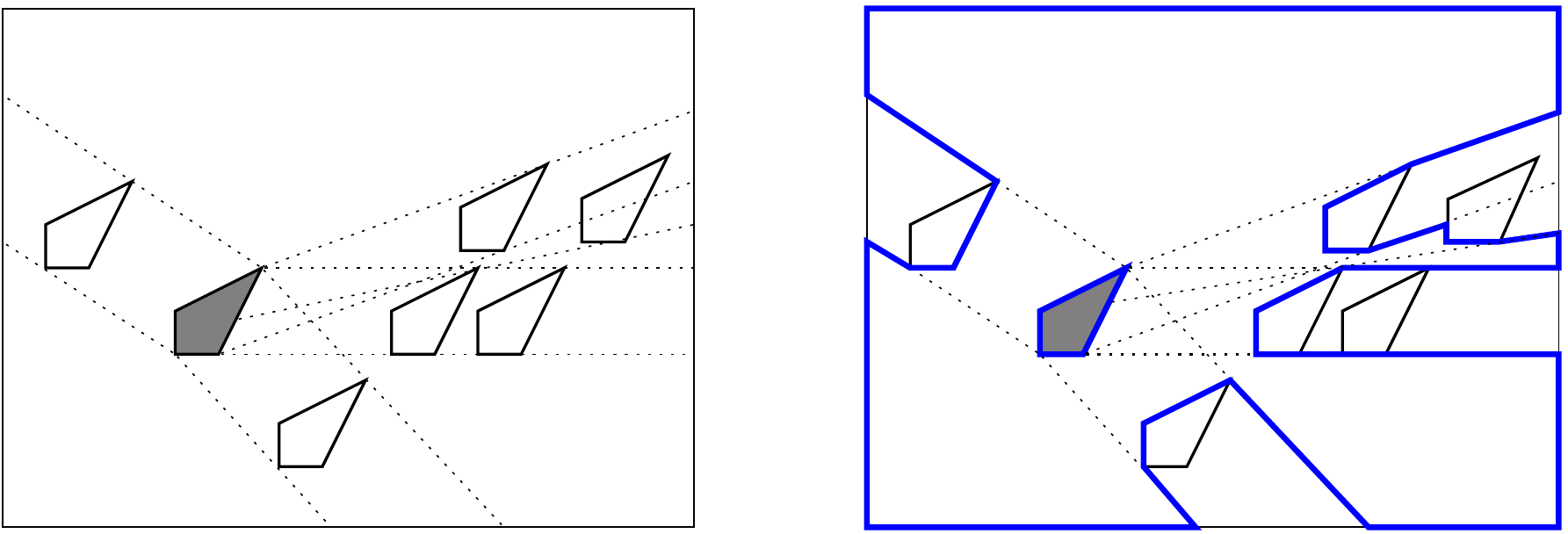}
\caption{Left: a family of seven translates.
  Right: The visibility region of the shaded translate is a polygon with one hole (in blue).}
\label{fig:translates}
\end{figure}

\paragraph{Visibility from a single point.} The algorithm in Corollary~\ref{cor:orp-zero} can be further simplified for families of translates of a convex polygon. Let $V_i=V(C_i)$. Observe that $V_i \cup C_i$ is a simple polygon (without holes).
Since $\partial C_i \subset V_i$, we have $ V_i \cup C_i = V_i \cup \mathring{C_i}$.
Moreover, we have $V_i \cap \mathring{C_j} =\emptyset$ for every $i \neq j$. Consequently, by using
the distributivity of intersection over union we obtain
\begin{align*}
  (V_i \cup C_i) \cap (V_j \cup C_j) &=  (V_i \cup \mathring{C_i}) \cap (V_j \cup \mathring{C_j}) \\
  &= (V_i \cap V_j) \cup (V_i \cap \mathring{C_j}) \cup (V_j \cap \mathring{C_i})
  \cup  (\mathring{C_i} \cap \mathring{C_j}) \\
 &= (V_i \cap V_j) \cup \emptyset \cup \emptyset \cup \emptyset \\
  &= V_i \cap V_j.
\end{align*}
Consequently, it suffices to work with simple polygons. Indeed,
\[ \bigcap_{i=1,\ldots,n} V_i = \bigcap_{i=1,\ldots,n} (V_i \cup C_i), \]
and so it suffices to determine whether the intersection of the simple polygons $(V_i \cup C_i)$,
$i=1,\ldots,n$,  is non-empty.

\section{$\NP$-hardness results}  \label{sec:np-hard}

In this section we prove that both $\orp$ and $\ewrp$ are  $\NP$-hard.
Both reductions are from Rectilinear $\tsp$ ($\rtsp$) assuming limited vision.
Essentially the same reductions also hold with unlimited vision.
It is known~\cite{GGJ76,P77} that both $\etsp$ and $\rtsp$ are $\NP$-hard.
The three relevant problems
$\orp$, $\etsp$, and $\rtsp$ (see~\cite{GJ79,P77}) can be formulated as decision problems
as follows:

\begin{quote}
$\orp$: Given a family $\F$ of $k$ polygonal regions and a positive integer $m$,
does there exist an observation tour of Euclidean length at most $m$?
\end{quote}

\begin{quote}
$\etsp$: Given a set of $n$ points in the plane and a positive integer $m$,
does there exist a tour of Euclidean length at most $m$ that visits all the points?
\end{quote}

\begin{quote}
$\rtsp$: Given a set of $n$ points in the plane and a positive integer $m$,
does there exist a tour of rectilinear length at most $m$ that visits all the points?
\end{quote}

\begin{figure}[htb]
\begin{center}
\includegraphics[width=0.97\columnwidth]{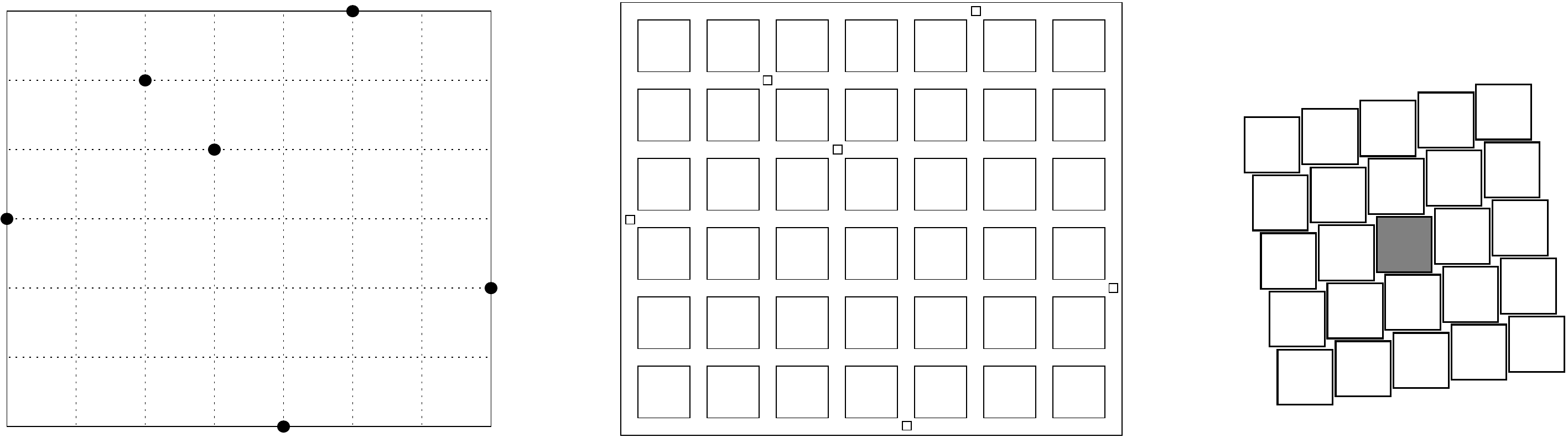}
\caption{Left: point set $S$ in $R$. Middle: family of axis-aligned squares in $B$.
  Right: a small cluster (gadget) $\G$ of $25$ disjoint congruent squares for each point in $S$.
  The shaded square in the center of $\G$ cannot be observed (or watched) without
  entering the convex hull $\conv(\G)$.
}
\label{fig:reduction}
\end{center}
\end{figure}

It is known from that $\etsp$ is $\NP$-hard under both $L_1$- and $L_2$-norms~\cite{GGJ76,P77}.
The $\NP$-hardness of $\wrp$ was first announced by Chin and Ntafos~\cite{CN86,CN88} via a reduction from $\etsp$ under the $L_2$-norm. However, the first valid proof,
due to Dumitrescu and T\'oth~\cite{DT12}, makes a reduction from $\rtsp$. To show that $\orp$ is $\NP$-hard, we use a reduction based on a similar idea, as follows;
refer to Fig.~\ref{fig:reduction}.

\paragraph{Proof of Theorem~\ref{thm:ORP-is-hard}.}
Given a set $S$ of $n$ points of the integer lattice $\mathbb{Z}^2$, assume without loss of generality that the smallest axis-aligned rectangle containing $S$ is $R=[0,a] \times [0,b]$,
for some positive integers $a \geq b$; see Fig.~\ref{fig:reduction}.
Construct a family $\F$ of $ab+25n$ axis-parallel squares contained in a rectangle $B \supset R$ as follows.
There are two types of squares: (i)~large squares that correspond to the cells of the grid, and
(ii)~small squares grouped in clusters of $25$, where each cluster corresponds to a point in $S$.
More precisely,  we have $ab$ large squares formed by the cells of the grid, but only slightly smaller so that they
are disjoint. We also have $n$ small clusters of $25$ small squares each.
For each cluster, the middle square can only be seen by entering the convex hull of the cluster.
The clusters are small enough so that they fit in the narrow corridors left by the big squares.
The center of the central square in a cluster is the \emph{reference} point of the cluster;
for each point in $p \in S$, the corresponding cluster $C(p)$ has its reference point at $p$.

The width of the narrow corridors formed by the large squares is set to $w=1/(10 a n)$.
The side-length of each small square is set to $s=\frac{w}{100}$.
We set $B=[-\frac{w}{2},a+\frac{w}{2}] \times [-\frac{w}{2},b+\frac{w}{2}]$, where the $ab$ large squares are
defined as $[i+\frac{w}{2},i+1-\frac{w}{2}]\times [j+\frac{w}{2},j+1-\frac{w}{2}]$
for $0 \leq i \leq a-1$ and $0 \leq j \leq b-1$. Between any two adjacent large squares,
there is a narrow rectangular corridor of length $1-w$ and width $w$. There are also narrow corridors
of the same dimensions between the boundary of $B$ and the adjacent large holes.

The reduction, hence the $\NP$-hardness, follows via the following claim.

\medskip
\noindent
\emph{Claim.} For a positive integer $m$, there exists a tour of $S$ of rectilinear length $m$ if and only if
there exists an observation tour of $\F$ of Euclidean length $m+\delta$, with $-0.4 \leq \delta \leq 0.4$.

\medskip
To verify the claim observe first that the rectilinear distance between
any two (lattice) points in $S$ is an integer. Hence the total rectilinear length
of the shortest tour of the points in $S$ is an integer, say $m$.
Observe also that any tour of the points can be converted into a tour of the family of squares,
and vice versa, by visiting the points in $S$ and the corresponding clusters in the same order.
Moreover, we show below that the lengths of the two tours are very close to each other.
Indeed, on the one hand, by making only small detours from any given $\tsp$ tour for $S$
yields an $\orp$ tour of $\F$. On the other hand, an $\orp$ tour of $\F$ can be
converted into a $\tsp$ tour for $S$ whose length is very close to the original one.
Note the following two properties of sub-paths that visit two consecutive clusters :
\begin{enumerate}\itemsep 0pt
\item If the $L_1$-distance between two points in $S$ is an integer $d$, $1 \leq d \leq a+b$,
  then any path between two points from which the central squares of the corresponding clusters
are visible has length at least $d(1-w)$, since any path has to traverse at least $d$ narrow corridors.
\item If the $L_1$-distance between two points in $S$ is an integer $d$, then there is a path of length at most
  $d+w \leq d(1+w)$ between any two points in the interiors of the corresponding clusters,
  since it takes a detour no longer than $w$ to observe all the squares in a cluster.
  At most $n$ such small detours are needed, and each adds at most $w$ to the total length.
\end{enumerate}
It follows that the rectilinear length of the shortest tour of the $n$ points in $S$ can differ from the (Euclidean) length
of the shortest external watchman tour of $\F$ by at most $2n(a+b)w = 2n(a+b)/(10 a n) \leq 0.4$, as required.
\qed

\paragraph{Proof of Theorem~\ref{thm:EWRP-is-hard}.}
We proceed with the same reduction and a similar claim.

\smallskip
\noindent
{\em Claim.} For a positive integer $m$, there exists a tour of $S$ of rectilinear length $m$ if and only if
there exists an external watchman tour of $\F$ of Euclidean length $m+\delta$, with $-0.4 \leq \delta \leq 0.4$.

\medskip
It suffices to notice that
(i)~since $R$ is the smallest axis-aligned rectangle containing $S$, visiting all the clusters
will automatically guarantee seeing the entire boundary for each of the large squares,
and (ii)~it takes a detour no longer than $w$ to see the entire boundary for each square in a visited cluster.
\qed

\paragraph{Remarks.}
Observe that the integrality requirement for $m$ is crucial.
Furthermore, as in the reduction from~\cite{DT12}, no such claim holds if the length of the tour of the points in $S$
is measured in the $L_2$-norm. Observe also that squares of only two different sizes are used in the reduction.
We conjecture that both $\orp$ and $\ewrp$ remain $\NP$-hard even for axis-aligned unit squares.

\section{Inapproximability results}
\label{sec:inapprox}

We deduce the inapproximability of \orp\ from that of \setcov. A \emph{set system} is a pair
$(U, \mathcal{S})$, where $U$ is a set and $\S$ is a collection of subsets of $U$. Given a set system $(U,\mathcal{S})$, the \setcov\ problem asks for the minimum number of sets in $\mathcal{S}$ whose union is $U$. \setcov\ cannot be approximated within a factor of $(1-o(1)) \ln n$ unless $\P=\NP$~\cite{DinurS14}, where $n=|U|$. Furthermore, for any $c\in (0,1)$, \setcov\ cannot be approximated within a factor of $c\, \ln n$ over instances where $m\leq O(n^{f(c)})$ for some function $f:(0,1)\to \RR$,
unless $\P=\NP$~\cite{DinurS14,Nelson07}; see also \cite{Feige98,LundY94,Moshkovitz15,Nelson07}.

Given a set system $(U,\mathcal{S})$ with $|U|=n$ and $|\mathcal{S}|=m$, we construct a family $\F$ of disjoint convex polygons in four stages. We first construct an arrangement of lines $\mathcal{L}$ in $\RR^2$, and then ``thicken'' the lines into narrow corridors. The family $\F$ will consist of the convex faces of this arrangement and $n+3$ additional axis-parallel rectangles inserted in the corridors at strategic locations. We continue with the details; see
Fig.~\ref{fig:inapprox}.

\begin{figure}[htb]
\begin{center}
\includegraphics[width=\columnwidth]{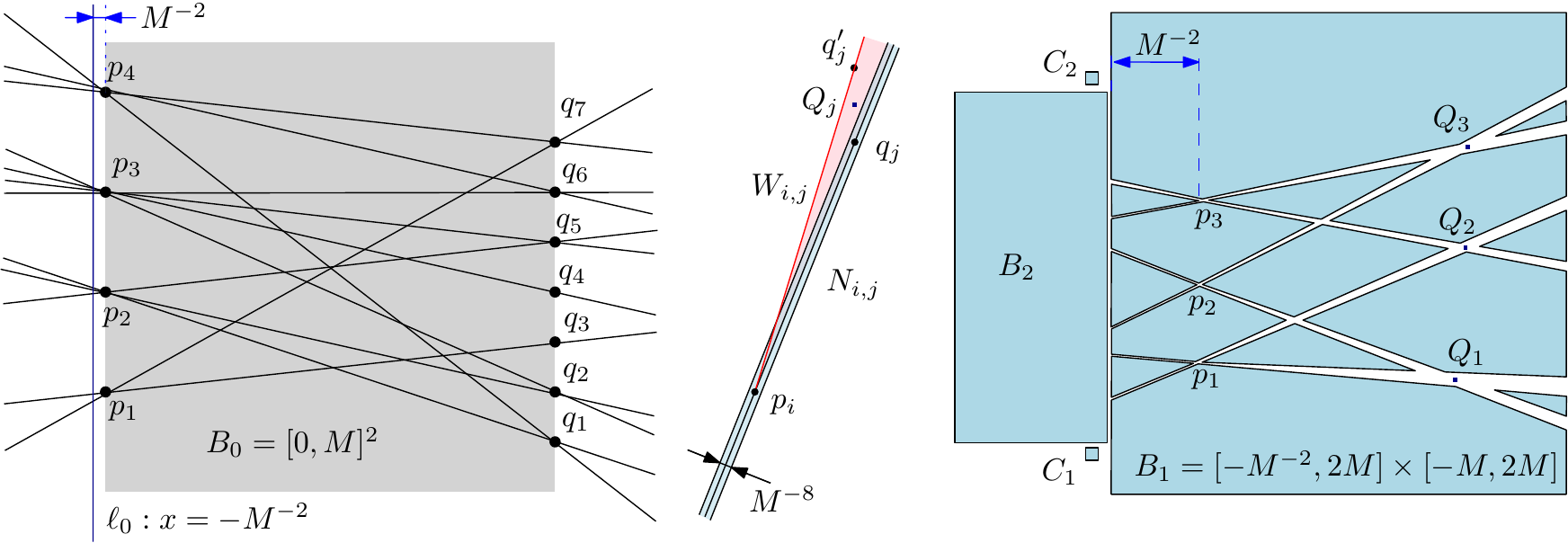}
\caption{Left: A line arrangement $\mathcal{L}$ constructed in Stage~1; here $n=7$ and $m=4$.
  Middle: A slab $N_{i,j}$ and a wedge $W_{i,j}$ for a line $p_iq_j\in \mathcal{L}$.
  Right: A schematic picture of the family $\F$ (not to scale).
}
\label{fig:inapprox}
\end{center}
\end{figure}

\smallskip
\noindent \textbf{Stage~1.}
We are given a set system $(U,\mathcal{S})$ with $U=\{1,\ldots ,n\}$ and $\mathcal{S}=\{S_1,\ldots , S_m\}$. Let $M=\max\{12m,n+1\}$, and consider the axis-aligned square $B_0=[0,M]^2$. Each set $S_i$, $i\in \{1,\ldots ,m\}$, is represented by the point $p_i=(0,i)$ at the left side of $B_0$, and each element $j\in U$ corresponds to the point $q_j=(M,j)$ at the right side of $B_0$. Let $\mathcal{L}$ be the set of lines $p_iq_j$ such that $j\in S_i$.

\smallskip
\noindent \textbf{Stage~2.} For each line $p_iq_j\in \mathcal{L}$, let $N_{i,j}$ be the $\frac12\, M^{-8}$-neighborhood of $\ell$, which is a slab of width $M^{-8}$. For each line $p_iq_j\in \mathcal{L}$, we also create a cone $W_{i,j}$ bounded by the ray $\overrightarrow{p_iq_j}$ from below and the ray $\overrightarrow{p_i q_j'}$ from above, where $q_j'=(M, j+M^{-4})$ is a point at distance $M^{-4}$ above $q_j$, and in particular $q_j'\notin N_{i,j}$. Note that $N_{i,j}\cup W_{i,j}$ is a simply connected region; see Fig.~\ref{fig:inapprox}(middle).

Consider the axis-aligned rectangle $B_1=[-M^{-2},2M]\times [-M,2M] \supset B_0$.
Let $\F_1$ be the bounded components of $B_1\setminus \bigcup_{i,j} (N_{i,j}\cup W_{i,j})$,
that is, we create a convex polygon for each bounded face of the arrangement $\mathcal{L}$ within $B_1$.

\smallskip
\noindent \textbf{Stage~3.} Create a family $\F_2$ of the following $n+3$ disjoint axis-aligned rectangles: a small square $Q_j$ of side length $M^{-8}$ centered at the midpoint of $q_jq_j'$ for all $j\in U$; two unit squares, denoted $C_1$ and $C_2$, resp., centered at $(-1,-1)$ and $(-1,M+1)$; and a large rectangle $B_2=[-M,-M^{-2}-M^{-4}]\times [0,M]$.

\smallskip
\noindent \textbf{Stage~4.} Apply the linear transformation $T:\RR^2\to \RR^2$, $T(x,y)=\left(\frac{M^2 x}{2},\frac{M^{-1} y}{24}\right)$,
and let $\F=\{T(C): C\in \F_1\cup \F_2\}$.

\begin{lemma}\label{lem:inapprox}
For an instance $(U,\mathcal{S})$ of \setcov~ with $|U|=n$ and $|\mathcal{S}|=m$, let $\mathcal{F}$ be the family of disjoint convex polygons constructed above.
\begin{enumerate}
\item There is a polynomial $f(m,n)$ such that the total number of vertices of the polygons in $\F$ is at most $f(m,n)$,  each vertex has rational coordinates where both numerators and denominators are bounded by $f(m,n)$.
\item For every integer $k$, $1\leq k\leq m$, the union of $k$ sets in $\mathcal{S}$ covers $U$ if and only if $\mathcal{F}$ admits an observation tour of length at most $k+\frac12$.
\end{enumerate}
\end{lemma}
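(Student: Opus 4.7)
The first claim is a combinatorial bookkeeping argument. The arrangement $\mathcal{L}$ has at most $mn$ lines; the slabs $N_{i,j}$ and wedges $W_{i,j}$ add $O(mn)$ further bounding lines and rays, giving an arrangement with $O((mn)^2)$ vertices and $O((mn)^2)$ bounded faces of polynomial total complexity, which constitute $\F_1$. The family $\F_2$ adds $n+3$ axis-aligned rectangles. All coordinates before $T$ are rationals of the form $a/M^b$ with integer $|a|\le \mathrm{poly}(M)$ and $b \in \{0,1,2,4,8\}$; the linear map $T$ simply multiplies each coordinate by $M^2/2$ or $1/(24M)$, preserving rationality with polynomially bounded numerator and denominator. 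Since $M = \max\{12m,n+1\}$ is polynomial in $m,n$, a single polynomial $f(m,n)$ bounds both the combinatorial complexity and the bit-size of every coordinate.

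For the second claim I prove both implications. \emph{Forward.} Given a cover $\{S_{i_1},\ldots,S_{i_k}\}$ of size $k$, I construct an observation tour $\tau$ of length at most $k+\tfrac12$ which, viewed in the pre-$T$ picture, essentially runs along the vertical segment $x\approx 0$ connecting the $p_{i_\ell}$'s and makes two short detours to see $C_1$ (below) and $C_2$ (above). From each $p_{i_\ell}$ the line of sight along the channel $N_{i_\ell,j}\cup W_{i_\ell,j}$ reaches $Q_j$ for every $j\in S_{i_\ell}$, so the covering property ensures every $Q_j$ is observed; the polygons of $\F_1$ adjacent to the entered channels and $B_2$ (visible above or below its right edge) are observed as a byproduct. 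In the transformed coordinates of $\F$, the two detours contribute at most $\tfrac12$ (seeing $T(C_1)$ requires some $y<0$ and $T(C_2)$ some $y>1/24$), while the $k$ short horizontal excursions needed to commit $\tau$ into each of the $k$ distinct channel bundles each contribute about one unit under the $M^2/2$ horizontal stretch of $T$. \emph{Backward.} Given $\tau$ of length at most $k+\tfrac12$, I extract a cover of size at most $k$. For each $j\in U$, observing $Q_j$ forces $\tau$ to intersect the visibility region $V(Q_j)$, which by Lemma~\ref{lem:connected} and the fact that $T(Q_j)$ sits at horizontal offset $M^3/2$ (unreachable under the length budget) must be entered via one of the channels $N_{i,j}\cup W_{i,j}$ near its left end, i.e., near $p_i$ for some $i$ with $j\in S_i$. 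Picking any such $i(j)$ for each $j$ yields a collection $\{S_{i(j)} : j\in U\}$ that covers $U$. To bound $|\{i(j)\}|\le k$, note that seeing $T(C_1)$ and $T(C_2)$ already forces the tour's vertical span to be at least $\tfrac12 - O(1/M)$, leaving a length budget of $k + O(1/M)$, and each additional distinct commitment to a new $p_i$ costs at least $1 - O(1/M)$ under the $M^2/2$ stretch; since this count is integer and $M\ge 12m$ absorbs the error terms, at most $k$ distinct indices appear.

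The technical crux is the excursion lower bound in the backward direction: formalizing that any tour entering $k'$ distinct slab families has length at least $k' + \tfrac12 - O(1/M)$. This depends on precise estimates of how the channels through distinct $p_i$'s are separated under $T$, and in particular showing that every subpath connecting one $p_i$'s channel bundle to another $p_{i'}$'s must traverse a gap whose transformed length is $\ge 1 - O(1/M)$. Calibrating this argument to the specific scaling $T(x,y) = (M^2 x/2, y/(24M))$, so that the horizontal stretch amplifies inter-bundle transitions into unit-length excursions while the vertical compression keeps intra-bundle motion cheap, is the main technical obstacle.
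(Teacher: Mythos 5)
Your outline follows the same overall strategy as the paper's proof (same budget accounting, same role for the $M^2/2$ horizontal stretch), but the step you explicitly defer as ``the main technical obstacle'' is precisely the mathematical content of direction (2b), so as written the lemma is not proved. The paper closes that step with three concrete facts your sketch does not supply. (i)~Since $Q_j$ lies at vertical distance at least $\frac13 M^{-4}$ above the slab $N_{i,j}$, which has width only $M^{-8}$, any point of $N_{i,j}$ seeing $Q_j$ has $x\geq -M\cdot M^{-8}/(\frac13 M^{-4})=-3M^{-3}$; with $W_{i,j}\subset V(Q_j)$ apexed at $p_i$, the leftmost point of $V(Q_j)$ has $x\geq -3M^{-3}$. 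Since a transformed length of $k+\frac12$ means a pre-transformation $x$-extent budget of $(2k+1)M^{-2}<M^{-1}$, the tour is trapped in the slab $\{|x|<M^{-1}\}$. (ii)~All lines of $\mathcal{L}$ have slope in $[-1,1]$ and the apexes $p_i$ have unit vertical spacing, so two lines of $\mathcal{L}$ that cross inside that slab pass through a common $p_i$; hence corridors with distinct indices $i\neq i'$ are separated by obstacle faces of $\F_1$ throughout the slab. (iii)~Therefore any arc of $\gamma$ joining visibility points of distinct bundles must reach $x=-M^{-2}$ on the left or $x=1$ on the right, costing $x$-extent at least $2M^{-2}-6M^{-3}$, i.e., transformed length at least $1-3M^{-1}$; summing over the $|I|$ arcs gives $|I|\leq k+\frac34$, and integrality yields $|I|\leq k$. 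Note that the paper needs no vertical bookkeeping at all here, which is fortunate for you, because your claim that seeing $T(C_1)$ and $T(C_2)$ forces a vertical span of $\frac12-O(1/M)$ is false: the whole transformed construction has height about $\frac18$, and the forced round trip between the visibility regions of $T(C_1)$ and $T(C_2)$ costs only about $\frac{1}{12}$. Your conclusion survives only because the integer rounding at the end absorbs any slack below $1$.

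The forward direction also has two genuine problems. First, a spine ``along the vertical segment $x\approx 0$ connecting the $p_{i_\ell}$'s'' is infeasible: at $x\approx 0$ the corridors of bundle $i$ have all converged into an $M^{-8}$-neighborhood of $p_i$, and the space between consecutive apexes $p_i$ and $p_{i+1}$ is filled by faces of $\F_1$, whose interiors the tour may not enter. The paper's spine is the left side of $B_1$, at $x=-M^{-2}$, traversed twice; this placement is exactly why each excursion into a chosen bundle costs one transformed unit ($x$-extent $M^{-2}$ in and $M^{-2}$ back out), the accounting you assert. Second, the tour must observe \emph{every} polygon of $\F_1$, not only those ``adjacent to the entered channels'' as you claim: the paper gets this because the spine crosses every line $p_iq_j\in\mathcal{L}$, the crossing point sees all faces touching the slab $N_{i,j}$, and the upper arc of every bounded face consists of bottom sides of such slabs. (A minor point on part~(1): vertices of $\F_1$ are intersections of corridor-boundary lines, so their coordinates are ratios of determinants, not rationals of the special form $a/M^{b}$; the polynomial bound you conclude still holds, and is all that is needed.)
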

\begin{proof}
(1) There are at most $mn$ lines in the arrangement $\mathcal{L}$ in Stage~1, each line can be written in the form $ax+by=c$ with integer coefficients in the range $[-M,M]$. Each line $p_iq_j\in \mathcal{L}$ corresponds to a narrow corridor $N_{i,j}\cup W_{i,j}$ bounded by three lines. The vertices of the polygons in $\F_1$ are intersection points of boundaries of such corridors: $O(mn)$ lines yield $O(m^2n^2)$ intersection points, which all have rational coordinates bounded by a polynomial in $m$ and $n$. The $n+3$ rectangles in $\F_2$ are defined explicitly, and they also have rational coordinates bounded by polynomials in $m$ and $n$. Finally, the linear transformation in Stage~4 maintains these properties.

\medskip\noindent
(2a) Assume that $k$ sets in $\mathcal{S}$ jointly cover $U$, that is, $\bigcup_{t=1}^k S_{i_t}=U$ for some $i_1,\ldots , i_k\in \{1,\ldots m\}$. We construct an observation tour for $\mathcal{F}$. We describe the tour in terms of the polygons before Stage~4, since a linear transformation maintains visibility (but distorts distances). Let the initial tour $\gamma_0$ traverse the left side of the rectangle $B_1$ twice.
The upper-left and lower-left corners of $B_1$ see the squares $C_1$ and $C_2$, and every point in $\gamma_0$ can see $B_2$. The tour $\gamma_0$ intersects every line $p_iq_j\in \mathcal{L}$. The point $\gamma_0\cap p_iq_j$ can see all convex bodies in $\F_1$ whose boundaries touch the slab $N_{i,j}$. Since the upper arc of every polygon in $\F_1$, other than the polygon containing the top side of $B_1$, is formed by the bottom sides of the slabs $N_{i,j}$, and so $\gamma_0$ can see every polygon in $\F_1$.

We expand $\gamma_0$ to a tour of $\F_1 \cup \F_2$ as follows: For $t=1,\ldots ,k$, choose an arbitrary point $q_{j_t}\in S_{i_t}$, and add a loop from the point $p_{i_t}q_{j_t}\cap \gamma_0$ to the point $p_{i_t}$ in the corridor $N(p_{i_t}q_{j_t})$. The point $p_{i_t}$ can see the small squares representing all $j\in S_{i_t}$. Since $\bigcup_{t=1}^k S_{i_t}=U$, every small square in $\F_2$ is visible from some point in the tour. After the linear transformation in Stage~4, we obtain an observation tour $\gamma$ for $\F$. We bound the length of $\gamma$ using the $L_1$-norm of its edges.
The linear transformation in Stage~4 compresses the $y$-extents of each edge, so the $L_1$ norm is dominated by the $x$-extents: The $x$-extent of an edge between the left side of $B_1$ and $p_i$ is exactly $M^{-2}$, thus the sum of $x$-extents is $2kM^{-2}$. Accounting for the $y$-extents of these $2k$ edges and $\gamma_0$, and applying the linear transformation in Stage~4, we obtain $|\gamma|\leq k+\frac12$, as required.

\medskip\noindent
(2b) Now assume that $\mathcal{F}$ admits an observation tour $\gamma$ with $|\gamma|\leq k+\frac12$. We analyze the construction before Stage~4, hence the sum of $x$-extents of all edges of the tour is at most $(2k+1)M^{-2}$. The observation tour intersects the visibility region $V(Q_j)$ for all $j\in U$.
A square $Q_j$ lies in the corridor $N_{i,j}\cup W_{i,j}$ iff $j\in S_i$, and $Q_j$ is only visible from such corridors. Each line $p_iq_j\in \mathcal{L}$ is incident to two points on opposite vertical sides of a square, and so its slope is in the range $[-1,1]$. By construction, $Q_j\subset W_{i,j}$ and the cone $W_{i,j}$ is convex, consequently $W_{i,j}\subset V(Q_j)$. However, $Q_j$ is above the slab $N_{i,j}$, every vertical segment between them has length at least $\frac13\, M^{-4}$, and so the $x$-coordinate of the leftmost point in $V(Q_j)\cap N_{i,j}$ is at least $-M\cdot M^{-8}/(\frac13\, M^{-4}) = -3\,M^{-3}$. Overall, the $x$-coordinate of the leftmost point in $V(Q_j)$ is at least $-3\,M^{-3}$.

Note that the visibility region $V(C_1)$ is disjoint from $V(Q_j)$ for all $j\in U$. Since $|\gamma|\leq (2k+1)M^{-2}<M^{-1}$, $\gamma$ must be contained in the vertical slab $\{(x,y)\in \RR^2: |x|<M^{-1}\}$.
Consequently, $\gamma$ visits each visibility region $V(Q_j)$ in the vertical slab $\{(x,y)\in \RR^2: -3\,M^{-3}<x<M^{-1}\}$.

Let $I$ be the set of indices $i\in \{1,\ldots , m\}$ such that $\gamma$ visits a
point $g_i\in (N_{i,j}\cup W_{i,j})\cap V(Q_j)$ for some $j\in U$.
Since $\gamma$ is an observation tour, we have $U=\cup_{i\in I} S_i$.
It remains to show that $|I|\leq k$.

The tour $\gamma$ determines a cyclic order on the points $G=\{g_i: i\in I\}$. We claim that the $x$-extent of the arc of $\gamma$ between two consecutive points in $G$ is at least $2M^{-2}-6M^{-3}$.
To prove the claim, note that every point $p_i=(0,2i)$ has integer coordinates,
and the slope of every line $p_iq_j\in \mathcal{L}$ is in the range $[-1,1]$. This implies that if two lines in $\mathcal{L}$ cross in the slab
$\{(x,y)\in \RR^2: |x|<M^{-1}\}$, then they cross at $p_i$ for some $i\in U$.
Consequently, if visibility corridors, say $N_{i,j}\cup W_{i,j}$ and $N_{i',j'}\cup W_{i',j'}$ intersect, then the intersection is in a $M^{-8}$-neighborhood of $p_i$ for some $i\in U$.
It follows that the arc of $\gamma$ between any two points in $G$ must reach the line $x=-M^{-2}$ on the left or the line $x=1$ on the right. In both cases, its arclength is at least $2M^{-2}-6M^{-3}$, as claimed.
We can bound the length of $\gamma$ by the summation of the arcs between consecutive points in $G$:
\[
|\gamma|\geq  |I| (2M^{-2}-6M^{-3})
\geq (2\, |I|) M^{-2} - 6m M^{-3}
\geq \left(2\, |I| -\frac12\right) M^{-2}.
\]
In combination with $|\gamma|\leq (2k+1)M^{-2}$ and $M\geq 12m$, we obtain that $|I|\leq k+\frac34$. As both $k$ and $|I|$ are integers, $|I|\leq k$ follows.
\end{proof}

\paragraph{Proof of Theorem~\ref{thm:ORP-inapprox}.}
As noted above~\cite{DinurS14}, there exists a constant $\kappa>0$ such that \setcov \, cannot be approximated
within a factor of $\kappa \log n$ on instances $(U,\mathcal{S})$ with $n=|U|$, $m=|\mathcal{S}|$, and $m\leq O(n^\alpha)$ for a constant $\alpha>0$ unless $\P=\NP$. By Lemma~\ref{lem:inapprox}, for every such instance $(U,\mathcal{S})$ of \setcov, there is a family $\mathcal{F}$ of $N$ disjoint convex polygons in the plane such that (i) for every $k\leq n$, there is a set cover of size $k$ iff $\mathcal{F}$ admits a observation route of length at most $k+\frac12$, and (ii) $N\leq {mn \choose 2} + n+3 \leq m^2 n^2 = O(n^{2\alpha +2})$.

Suppose that for $\delta=\kappa (2\alpha + 2)^{-1}>0$, there exists a polynomial-time $(\delta\log N)$-factor approximation algorithm for \orp \, with $N$ convex bodies. Since $\delta\log N \leq \delta (2\alpha+2) \log n = \kappa \log n$, this yields a
$(\kappa \log n)$-approximation for \setcov, which is a contradiction unless $\P=\NP$.
\qed


\section{External watchman tours for a convex polygon}  \label{sec:pentagon}

Given a polygon, the External Watchman Route problem ({\textsc EWRP}) is that of finding a shortest route such that each point in the exterior of the polygon is visible from some point along the route. For a convex polygon, this requirement is tantamount to requiring that each point on the boundary of the polygon is visible from some point along the route.

Let $P$ be a convex polygon. Ntafos and Gewali~\cite{NG94} distinguished between two types of external watchman tours: those that wrap around the perimeter and those that do not. They also showed that the second type of route can be obtained by doubling a simple open curve that wraps around a part of $P$'s boundary and is extended at both ends until the vision
encompasses the entire boundary of $P$; see Fig.~\ref{fig:types} for an example. They referred to the second type as ``a $2$-leg watchman route'', see~\cite{NG94}.

\begin{figure}[ht]
\centering
\includegraphics[scale=0.5]{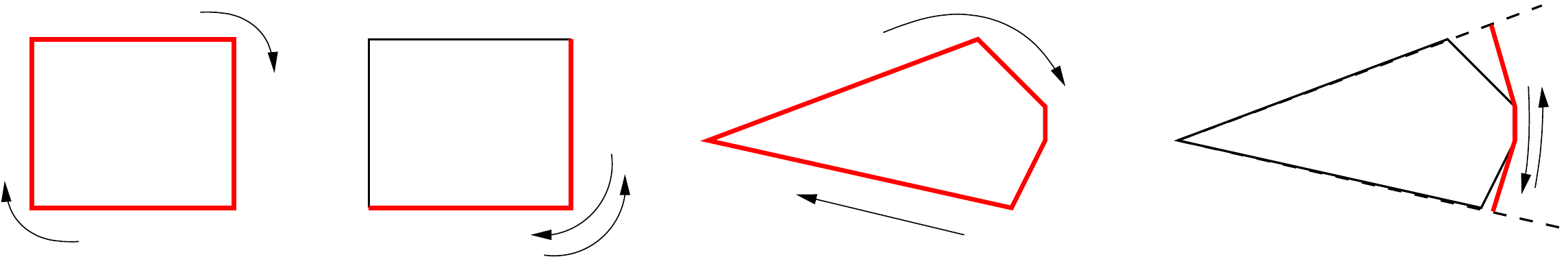}
\caption{The two types of external watchman tours.}
\label{fig:types}
\end{figure}

\paragraph{Proof of Theorem~\ref{thm:pentagon}.}
  Let $P$ be a convex pentagon whose angles listed in clockwise order from the top are
  $120^\circ$, $120^\circ$, $90^\circ$, $90^\circ$, $120^\circ$; refer to Fig~\ref{fig:pentagon}.
  Its horizontal base has length $2$ and each of its vertical sides is of length $\eps$, for a small $\eps>0$.
  Note that the length of each slanted top side is  $a=2/\sqrt3$.
Observe that all angles are at least $90^\circ$. By slightly shortening the base to $2(1-\eps^2)$
the pentagon becomes one with all angles obtuse, say $P'$, whose angles are
$120^\circ$, $(120 -\delta)^\circ$, $(90+\delta)^\circ$, $(90+\delta)^\circ$, $(120-\delta)^\circ$,
for some small $\delta>0$. Its side lengths (listed in the same order) are
$a$, $\eps (1+\eps^2)^{1/2}$, $2(1-\eps^2)$, $\eps (1+\eps^2)^{1/2}$, $a$.
Thus
\[ \per(P') = 2(a + (1 -\eps^2) + \eps (1+\eps^2)^{1/2}) > 2 (1 + a -\eps). \]

\begin{figure}[ht]
\centering
\includegraphics[scale=0.55]{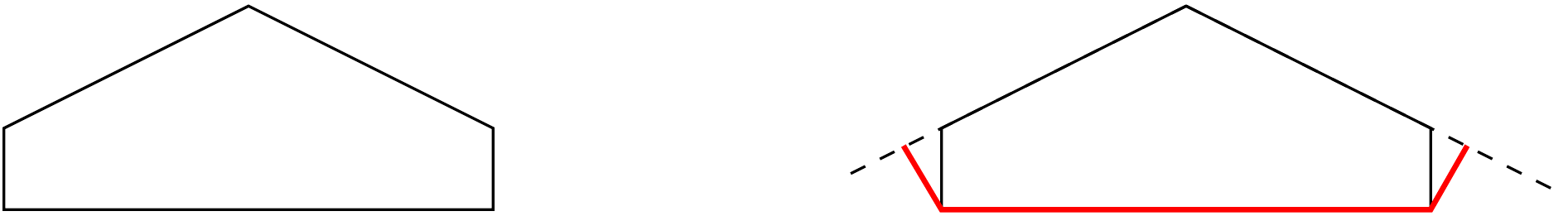}
\caption{Left: a pentagon with all angles obtuse can be obtained by slightly perturbing this one.
  Right: a second type of watchman route can be obtained by doubling the red $3$-chain polygonal curve;
  its two legs are perpendicular to the extensions of the two top sides. The figure is not to scale.}
\label{fig:pentagon}
\end{figure}

Observe that the red curve in the figure is a $3$-polygonal chain that makes a valid watchman path for $P'$.
Its length is $<2(1+\eps)$ and thus doubling it yields an external watchman route of length $<4(1+\eps)$.
It remains to show that $4(1+\eps) < 2 (1 + a -\eps)$ or $\eps < (a-1)/3 = \frac{2 -\sqrt3}{3 \sqrt3}$, which clearly holds for small $\eps>0$. Note that the ratio between the length of the double red curve and the perimeter is in fact smaller than some absolute constant $<1$, \eg, $0.93<1$ in the range $\eps \leq 0.001$.

\medskip
Alternatively, by cutting two small right-angled isosceles triangles,
one from each side of the base of the pentagon one gets a convex heptagon
all whose angles are at least $120^\circ$. A~similar calculation as above shows that
this heptagon provides yet another counterexample in which all angles are even larger, here at least $120^\circ$.

By slightly 'shaving' the pentagon in a repeated manner, one can increase the number
of vertices while maintaining all angles obtuse and thereby obtain counterexamples to Conjecture 1 in~\cite{AW06}
for every $n \geq 5$.
Note, however, that since every triangle or convex quadrilateral has at least one nonobtuse angle, the range for $n$ in Theorem~\ref{thm:pentagon} cannot be improved.
\qed

\section{Problem comparison}
\label{sec:comparison}

In this section we compare the optimal solutions to the three problems  ($\orp$, $\ewrp$, and $\tspn$)
on various instances. While such solutions can differ substantially, they can be also
close to each other in certain natural scenarios (as in Theorem~\ref{thm:roughly-the-same}).
As a result, we get a better understanding of these problems.

\begin{observation} \label{obs:simple}
  Let $\F_1 \subseteq \F_2$ be two families of convex bodies. Then
  \begin{enumerate} \itemsep 1pt
    \item $\opt_\tspn(\F_1) \leq \opt_\tspn(\F_2)$
    \item $\opt_\orp(\F_1) \leq \opt_\orp(\F_2)$
    \item $\opt_\ewrp(\F_1) \leq \opt_\ewrp(\F_2)$
    \end{enumerate}
\end{observation}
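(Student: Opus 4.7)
}
The plan is to prove all three inequalities simultaneously by a single monotonicity argument: every tour that is feasible for the larger family $\F_2$ is also feasible for the smaller family $\F_1$, for each of the three problems. Once this containment of feasible-tour sets is established, the inequalities follow because the infimum of tour length over a larger set of tours is at most the infimum over a smaller one.

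First I would fix an optimal (or near-optimal) tour $\gamma$ for $\F_2$ under the problem in question, and verify feasibility of the \emph{same curve} $\gamma$ for $\F_1$. For $\tspn$, feasibility means that $\gamma$ intersects every region; since $\F_1 \subseteq \F_2$, any curve intersecting every region in $\F_2$ automatically intersects every region in $\F_1$, which settles case~(1). For $\orp$ and $\ewrp$, the two ingredients of feasibility are (a)~the curve avoids the interiors of all obstacles and (b)~the curve sees the required portion of each region (some point of each $C\in \F$ for $\orp$, all of $\partial C$ for $\ewrp$). Condition~(a) is immediate from $\F_1 \subseteq \F_2$.

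The only substantive point is condition~(b), the visibility condition, and this is the step I would be most careful about. The definition of visibility in the paper requires the sight segment $pq$ to avoid the interiors of \emph{all} regions in the current family. Since $\F_1 \subseteq \F_2$, a segment that avoids the interiors of every region in $\F_2$ automatically avoids the interiors of every region in $\F_1$; in other words, shrinking the obstacle family can only \emph{enlarge} the visibility region of any given body. Thus any witness point of $\gamma$ that sees a given $C \in \F_1 \subseteq \F_2$ in the arrangement $\F_2$ still sees $C$ in the arrangement $\F_1$. This handles case~(2). For case~(3), the same observation applies uniformly over all boundary points: if $\gamma$ collectively sees $\partial C$ under the obstacles $\F_2$, it collectively sees $\partial C$ under the smaller obstacle set $\F_1$.

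Having established that the feasible-tour set under $\F_2$ is contained in the feasible-tour set under $\F_1$ for each of the three problems, taking infima on both sides yields $\opt_X(\F_1) \leq \opt_X(\F_2)$ for $X \in \{\tspn, \orp, \ewrp\}$, completing the proof. No obstacle in the argument is technical; the only thing worth flagging is the direction of the visibility monotonicity (fewer obstacles means easier to see), which is exactly what makes the observation go through uniformly for all three problems.
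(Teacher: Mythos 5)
Your proposal is correct and is precisely the monotonicity argument the paper has in mind: the paper states Observation~\ref{obs:simple} without proof, treating as immediate the fact that any feasible tour for $\F_2$ remains feasible for $\F_1$ (removing obstacles preserves avoidance and can only enlarge visibility regions), so the infimum over the larger feasible set is no greater. You correctly flag the one point needing care --- the direction of visibility monotonicity under the paper's definition of mutual visibility --- and your treatment of it is right.
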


We start by observing a key difference between the range of optimal solutions for these problems.
Specifically, we have the following lower bound on the length of an external watchman route.

\begin{lemma} \label{lem:non-zero}
  Let $C$ be a convex body and $W$ be an external watchman route for $C$. Then $\len(W) >0$.
  Similarly, let $\F$ be a family of disjoint convex bodies. Then $\opt_\ewrp(\F)>0$.
\end{lemma}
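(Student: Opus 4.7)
The plan is to argue by contradiction. Assume $\len(W)=0$, so the route $W$ is a single point $p$. Since $W$ is an external route, $p$ cannot lie in $\mathring{C}$, so either $p \notin C$ or $p \in \partial C$. I would show that in both cases $p$ fails to see some point of $\partial C$, contradicting the watchman property.

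For the main case $p \notin C$, the standard picture is to draw the two tangent lines from $p$ to $C$, touching $\partial C$ at points $t_1$ and $t_2$ that split $\partial C$ into a ``near'' arc (facing $p$) and a ``far'' arc. For any $q$ on the far arc, the segment $pq$ crosses the near arc first and therefore enters $\mathring{C}$, so $q$ is not visible from $p$. Because $C$ has nonempty interior, the far arc is a nondegenerate arc, producing the required invisible point. For the degenerate case $p \in \partial C$, I would take a supporting line $L$ of $C$ at $p$; any $q \in \partial C$ that does not lie on a flat portion of $\partial C$ passing through $p$ yields a chord $pq$ entering $\mathring{C}$, and such $q$ exist because $C$ has positive diameter.

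For the family statement, the argument localizes to a single body: if $\opt_\ewrp(\F)=0$, then the optimal route is a single point $p$ with $p \notin \bigcup_{C \in \F} \mathring{C}$. Fixing any $C \in \F$, we have $p \notin \mathring{C}$, so the single-body case already exhibits a point of $\partial C$ invisible from $p$. Additional obstacles from $\F \setminus \{C\}$ can only block further sightlines, so the failure persists and gives the contradiction.

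The main obstacle, such as it is, is cleanly handling the corner case $p \in \partial C$, where the two-tangent picture degenerates; one has to invoke the fact that $C$ has positive diameter rather than relying on the tangent-line split. Beyond that, the proof is essentially a one-line convexity observation and should take only a few lines.
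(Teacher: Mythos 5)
Your argument is correct, but the paper gets the same contradiction with a single, case-free observation: if all of $\partial C$ were visible from one point $o$, shoot a ray from $o$ through the interior of $C$; it meets $\partial C$ at a near point and a far point $q$, and the segment $oq$ passes through $\mathring{C}$, so $q$ is invisible. This ray trick makes your tangent-line decomposition (near arc vs.\ far arc) unnecessary and, more importantly, handles your degenerate case $p \in \partial C$ uniformly, since there the ray's first boundary point is $p$ itself and the same chord argument applies. In that degenerate case your justification also has a small slip: the existence of $q \in \partial C$ lying on no flat portion of $\partial C$ through $p$ follows from $C$ having nonempty interior (so that $\partial C$ is not contained in the at most two maximal boundary segments through $p$), not from positive diameter --- a segment has positive diameter yet would defeat the argument as literally stated; you are rescued only because the paper's definition of a convex body requires nonempty interior. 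For the family statement, your remark that additional obstacles only block further sightlines (and that the route must avoid them as well) is exactly the monotonicity recorded in Observation~\ref{obs:simple}, which the paper invokes as $\opt_\ewrp(\F) \geq \opt_\ewrp(\{C\}) > 0$, so the two treatments coincide there. Net effect: your proof buys nothing extra for its added machinery; the paper's one-ray version is the streamlined form of the same convexity idea.
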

\begin{proof}
  Assume that the boundary $\partial C$ is visible from a single point $o$. Take
  a ray from $o$ that goes through the interior of $C$ and intersects $\partial C$ at $p$ and $q$, respectively.
  Then $q$ is not visible from $o$, a contradiction, and the first part in the lemma follows.
  For the second part, let $C$ denote an arbitrary element of $\F$. By Observation~\ref{obs:simple},
$\opt_\ewrp(\F) \geq \opt_\ewrp(\{C\})>0$, as claimed.
\end{proof}

We next confirm the intuition ``observing is easy, traveling is expensive''.

\begin{lemma} \label{lem:compare1}
  There exist families $\F$ of $n$ congruent disks for which 
  $\opt_\ewrp(\F) \ll \opt_\tspn(\F)$.
 In particular $\opt_\ewrp(\F) = \Theta(1)$ and 
 $\opt_\tspn(\F)=\Theta(\sqrt{n})$.
\end{lemma}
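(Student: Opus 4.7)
The plan is to exhibit a family of very small congruent disks arranged in a regular grid inside the unit square: the grid positions force any TSPN tour to be long, while the exterior of the unit square already admits a short external watchman route.

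Assume $n$ is a perfect square, fix $r := n^{-3}$, and take
\[
\F := \{ D(p_{ij},\, r) : 1 \le i,j \le \sqrt n \}, \qquad p_{ij} := (i/\sqrt n,\, j/\sqrt n).
\]
Since $2r \ll 1/\sqrt n$, the disks are pairwise disjoint. For $\opt_\tspn(\F) = O(\sqrt n)$, I use the snake tour through the rows, of length $\sqrt n + O(1)$. For the matching lower bound, I snap each visit point on the tour to the corresponding center $p_{ij}$; this perturbs the total length by $O(nr) = o(1)$ and reduces the problem to TSP on the $n$ centers $p_{ij}$. Since the pairwise distances between these centers are at least $1/\sqrt n$, every tour through them has length at least $n \cdot (1/\sqrt n) = \sqrt n$. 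Hence $\opt_\tspn(\F) = \Theta(\sqrt n)$.

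For the EWRP lower bound, I observe that in order to see the leftmost point of any leftmost-column disk, the tour must contain a point with $x$-coordinate at most $1/\sqrt n - r$, and symmetrically for the rightmost column; so the tour has horizontal extent at least $1 - 1/\sqrt n + 2r$, hence length at least $2(1-o(1)) = \Omega(1)$. For the upper bound I propose $\gamma := \partial([-\tfrac{1}{10},\, 1+\tfrac{1}{10}]^2)$, of length $24/5$, and argue that every boundary point $p$ of every disk is visible from some point of $\gamma$. Fix such a $p$ and let $A_p$ denote the sub-arc of $\gamma$ lying in the outward tangent half-plane at $p$; since $\gamma$ encloses the grid, $A_p$ contains an entire side of $\gamma$ and $|A_p| = \Omega(1)$. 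For every other disk $D' = D(c',r) \in \F$, the set of $V \in A_p$ for which segment $Vp$ enters $\mathring{D'}$ is a sub-interval of $A_p$ of length $O(r / \operatorname{dist}(c', p)) = O(r\sqrt n)$, by a similar-triangles estimate using $|c' - p| \ge 1/\sqrt n - r$. Summing over the $n-1$ other disks, the total forbidden measure on $A_p$ is $O(n^{3/2} r) = O(n^{-3/2}) \ll |A_p|$, so an admissible $V$ always exists. This gives $\opt_\ewrp(\F) = \Theta(1)$.

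The main technical obstacle is the occlusion bookkeeping in the EWRP upper bound: a priori, each of the $n-1$ other disks could obstruct sightlines from $\gamma$ to a given boundary point $p$. The argument is neutralized by choosing $r$ polynomially small in $n$, so that the bad sub-arcs on $A_p$ have total measure much smaller than $|A_p|$; a less aggressive choice such as $r = 1/\sqrt n$ would render each individual forbidden interval comparable to $|A_p|$ and the measure-theoretic argument would collapse.
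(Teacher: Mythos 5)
Your proposal is correct and follows essentially the same route as the paper: the paper likewise places $n$ congruent disks of sufficiently small radius at the points of a $\sqrt{n}\times\sqrt{n}$ grid scaled into the unit square, obtains $\opt_\ewrp(\F)=\Theta(1)$ via a route around the perimeter of $\conv(\F)$ (citing P\'olya's orchard visibility problem in place of your explicit occlusion bookkeeping, which is a nice added detail), and gets $\opt_\tspn(\F)=\Theta(\sqrt{n})$ from the grid spacing (citing Few's theorem for the upper bound). One minor slip that does not affect correctness: when the tangent line at $p$ cuts off a corner of the enlarged square, the outward arc $A_p$ need not contain an entire side of $\gamma$; but since $p$ lies at distance at least $1/10$ from $\gamma$, that arc still has length $\Omega(1)$, which is all your measure estimate requires.
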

\begin{proof}
Assume that $n$ is a perfect square.
Place a disk of radius $\eps>0$ centered at each point of the integer lattice $[1,\sqrt{n}]^2$.
Scale down the construction so that it fits inside the unit square $[0,1]^2$.
  If the radius is sufficiently small, there  is a watchman route for $\F$ of length $\Theta(1)$ that goes around
  and close to the perimeter of $\conv(\F)$; for a similar setup recall the classical {\em orchard visibility problem}
  due to P\'olya~\cite{Pol1918}.  On the other hand, the length of the shortest tour visiting each disk is
  clearly $\Theta(\sqrt{n})$ (recall also Few's result~\cite{Fe55}).
\end{proof}

On the other hand, when disks are densely packed, one expects that the length of an optimal $\tspn$ tour
approximates well the length of an optimal watchman tour. Indeed, the intuition is that one cannot see too far
in a densely packed forest of congruent trees and in order to see each tree one essentially needs to visit
the entire forest. Moreover, we show that in this scenario the length of a shortest $\orp$ tour,
$\ewrp$ tour, and $\tspn$ tour are roughly the same, namely within constant factors from each other.
We recall the following result of Dumitrescu and Jiang~\cite{DJ11a}
confirming a conjecture by Mitchell. (It is likely that the result holds for a much smaller radius threshold, possibly $<100$.)

\begin{theorem} \label{thm:hide} {\rm \cite{DJ11a}}
Any dense (circular) forest with congruent trees (of unit diameter) that is deep enough has a hidden point.
\end{theorem}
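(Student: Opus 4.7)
The plan is to prove the theorem by an averaging argument over candidate observation points, showing that the total angular measure of directions in which visibility \emph{escapes} the forest must drop below any positive threshold once the forest is deep enough; existence of a hidden point will then follow by pigeonhole. I would work with the following concrete model: a family $\F$ of disjoint open disks of unit diameter contained in a disk $B$ of radius $R$ centered at the origin, with a density condition stating that every subdisk of $B$ of some fixed radius $\rho=O(1)$ meets at least one tree of $\F$.

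First, I would set up the key geometric estimate: a unit-diameter tree at distance $d$ from a candidate observation point $o$ blocks visibility in an angular interval of measure $\Theta(1/d)$. Slicing $B$ into concentric annuli of thickness $\Theta(\rho)$ about $o$, the density hypothesis yields $\Theta(d/\rho^2)$ trees in the annulus at radius $d$, so the total angular obstruction contributed by that annulus is $\Theta(1/\rho^2)$, independent of $d$. Summing over the $\Theta(R/\rho)$ annuli gives a total angular obstruction growing like $\Theta(R/\rho^3)$, which exceeds $2\pi$ once $R$ is large compared to $\rho$. The naive conclusion ``total obstruction exceeds $2\pi$, so every direction is blocked'' is of course false because of overcounting along the same line of sight, but it motivates the second averaging step.

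To actually produce a hidden point I would average over observation points $o$ in a small central subdisk of $B$ and, for each direction $\theta$, bound the measure of positions $o$ from which the ray of direction $\theta$ escapes $B$ without meeting a tree. Such rays are confined to a tube of width $1$ along direction $\theta$; the density condition forces this tube to contain many trees once it has length comparable to $R$, so the measure of escaping $(o,\theta)$ pairs decays with $R$. A Fubini exchange turns this into an upper bound on the expected ``escaping angular measure'' at a random $o$; once this expectation falls below $2\pi$, some particular $o$ has no escaping direction and is therefore hidden. The main obstacle, and where I expect the core technical work of~\cite{DJ11a} to sit, is the tube-avoidance lemma: showing quantitatively that in an \emph{arbitrary} dense arrangement of unit-diameter trees (not just, e.g., a perturbed lattice), no straight tube of width $1$ can traverse the full depth of the forest without meeting a tree, with a uniform rate of blockage depending only on the density parameter $\rho$. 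This universality over arrangements, rather than any single angular or areal calculation, is the crux of the statement.
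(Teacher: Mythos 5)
You should first note the ground rules of this comparison: the paper you were given does \emph{not} prove this statement at all---Theorem~\ref{thm:hide} is imported verbatim from Dumitrescu and Jiang~\cite{DJ11a} and used as a black box in the proof of Theorem~\ref{thm:roughly-the-same}. So the benchmark is the (long) proof in~\cite{DJ11a}, and against that benchmark your sketch has a genuine gap, located exactly where you placed your ``crux'': the tube-avoidance lemma is \emph{false}. A ray from $o$ is blocked iff some tree center lies within distance $\frac12$ of it, so an escaping ray corresponds to a width-$1$ tube empty of \emph{centers}---and maximality of the packing (every point within distance $1$ of a center, else a new tree could be added) is perfectly compatible with arbitrarily long empty width-$1$ tubes. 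Concretely, place centers along two lines parallel to the tube axis at distance $0.6$ from it, spaced $1$ apart along each line: no center is within $\frac12$ of the axis, yet every axis point is within $\sqrt{0.36+0.25}=\sqrt{0.61}<1$ of a center, so maximality holds locally and the corridor can run the full depth of the forest. Hence there is no ``uniform rate of blockage depending only on the density parameter,'' and the expected escaping angular measure at a random $o$ need \emph{not} decay with $R$; indeed, arbitrarily deep maximal forests contain points that see out, and the theorem only asserts that \emph{some} point is hidden. Note also a slip in phrasing that hides the issue: a tree ``meeting the tube'' does not block the ray---density trivially forces trees to meet any long tube, but only centers inside the tube matter.

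Because of this, your Fubini step can at best bound the measure of non-hidden points by (number of escape corridors) $\times$ (corridor area), and the entire difficulty---controlling the number and global geometry of escape corridors in an \emph{arbitrary} maximal packing, where sightlines thread sequences of gaps rather than single straight tubes---is untouched. That is precisely where the roughly twenty pages of~\cite{DJ11a} are spent, and the weakness of the resulting per-corridor estimates is reflected in the astronomical depth threshold (compare the constant $A=10^{109}$ used in Section~7 of this paper). Your opening annulus computation (angular obstruction $\Theta(1/d)$ per tree, constant per annulus, total $\gg 2\pi$) is fine as motivation and you correctly flag the overcounting; but as written, the quantitative engine of the proposal would prove too much---it would imply that deep forests have \emph{no} points seeing out, contradicting the two-wall corridor construction above---so the argument cannot be repaired by tuning constants; it needs the corridor-counting machinery that constitutes the actual content of the cited proof.
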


Using Theorem~\ref{thm:hide} on one hand and Few's technique~\cite{Fe55} for traversing points
in a square, layer by layer, on the other hand, we can prove that the optimal solutions for the three problems are
roughly the same in this setting.

\paragraph{Proof of Theorem~\ref{thm:roughly-the-same}.}
(i)~Recall that $\F$ is a maximal packing of unit disks in a large square~$S$.
Observe that $\opt_\orp(\F) \leq \opt_\ewrp(\F)$ (by problem definition). We start with the lower bounds.
  Let $A=10^{109}$ and $s$ denote the side length of $S$. We may assume without loss of generality
  that $n$ and $s$ are large enough, in particular that $s$ is a multiple of $A$. Thus $S$ can be partitioned into
  $s^2/A^2$ subsquares of side $A$. Since $\F$ is a maximal packing we have $n =\Theta(s^2)$. Let $T$ be
  an observation tour for $\F$. Since each of the $s^2/A^2$ subsquares of side $A$ has a hidden point, $T$ must visit
  every subsquare. Moreover, this implies that $\len(T) = \Omega(s^2/A^2)  - O(s) = \Omega(s^2/A^2) = \Omega(n)$.
By the inequality in the beginning, it follows that  $\opt_\ewrp(\F) \geq \opt_\orp(\F) = \Omega(n)$.

Next we show that $\opt_\tspn(\F)=\Omega(n)$. Let $T$ be a $\tspn$ tour for $\F$.
A standard disk packing argument (see~\cite[Prop.~1]{DM03} for details) yields $\len(T) \geq \pi (n-4)/4$,
whence $\opt_\tspn(\F)=\Omega(n)$, as claimed.

To prove the upper bounds, we first consider $\ewrp$.
Subdivide $S$ into horizontal strips (rectangles) $s \times 4$ and traverse the strips in a zig-zag manner,
say, from top to bottom. Finally return to the start position by following the boundary $\partial S$.
In every strip, move from left to right or from right to left and circle around the boundary of each disk
in increasing (resp., decreasing) order of the $x$-coordinates of their centers.  Since $\F$ is a maximal packing,
each disk-to-disk move is bounded from above by $O(1)$ in length and thus the total length is $O(n)$,
whence $\opt_\orp(\F) \leq \opt_\ewrp(\F) = O(n)$. Following the same (strip visitation) algorithm while visiting
each disk instead of circling around it, yields $\opt_\tspn(\F)=O(n)$, as claimed.

\smallskip
(ii)~The stated inequalities have been established when proving Theorems~\ref{thm:ORP-is-hard}
  and~\ref{thm:EWRP-is-hard}.
\qed

\section{Concluding remarks}  \label{sec:conclusion}

We conclude with a few open problems regarding the remaining gaps and the quality of approximation.

\begin{enumerate}  \itemsep 1pt

\item Is there a constant-ratio approximation algorithm for the shortest observation tour problem for families of  disjoint axis-aligned unit squares (or unit disks, or translates of a convex polygon)?

\item What approximations can be computed for the shortest observation tour for families of disjoint convex polygons?

\item Is there a constant-ratio approximation algorithm for the shortest external watchman route for families of disjoint axis-aligned unit squares (or unit disks or translates of a convex polygons)?

\item What approximations can be computed for the shortest external watchman route for families of disjoint convex polygons? Is the problem $\APX$-hard?

\end{enumerate}


\begin{thebibliography}{10}

\bibitem{AbrahamsenAM22}
Mikkel Abrahamsen, Anna Adamaszek, and Tillmann Miltzow.
\newblock The art gallery problem is $\exists\mathbb{R}$-complete.
\newblock {\em J. {ACM}}, 69(1):4:1--4:70, 2022.
\newblock \href {https://doi.org/10.1145/3486220} {\path{doi:10.1145/3486220}}.

\bibitem{AW06}
Rafa Absar and Sue Whitesides.
\newblock On computing shortest external watchman routes for convex polygons.
\newblock In {\em Proceedings of the 18th Annual Canadian Conference on
  Computational Geometry ({CCCG})}, Kingston, ON, 2006.
\newblock URL: \url{http://www.cs.queensu.ca/cccg/papers/cccg16.pdf}.

\bibitem{AH94}
Esther~M. Arkin and Refael Hassin.
\newblock Approximation algorithms for the geometric covering salesman problem.
\newblock {\em Discret. Appl. Math.}, 55(3):197--218, 1994.
\newblock \href {https://doi.org/10.1016/0166-218X(94)90008-6}
  {\path{doi:10.1016/0166-218X(94)90008-6}}.

\bibitem{Ar98}
Sanjeev Arora.
\newblock Polynomial time approximation schemes for {E}uclidean traveling
  salesman and other geometric problems.
\newblock {\em J. {ACM}}, 45(5):753--782, 1998.
\newblock \href {https://doi.org/10.1145/290179.290180}
  {\path{doi:10.1145/290179.290180}}.

\bibitem{BonnetM20}
{\'{E}}douard Bonnet and Tillmann Miltzow.
\newblock Parameterized hardness of art gallery problems.
\newblock {\em {ACM} Trans. Algorithms}, 16(4):42:1--42:23, 2020.
\newblock \href {https://doi.org/10.1145/3398684} {\path{doi:10.1145/3398684}}.

\bibitem{CJN99}
Svante Carlsson, H{\aa}kan Jonsson, and Bengt~J. Nilsson.
\newblock Finding the shortest watchman route in a simple polygon.
\newblock {\em Discret. Comput. Geom.}, 22(3):377--402, 1999.
\newblock \href {https://doi.org/10.1007/PL00009467}
  {\path{doi:10.1007/PL00009467}}.

\bibitem{CN86}
Wei{-}pang Chin and Simeon~C. Ntafos.
\newblock Optimum watchman routes.
\newblock In {\em Proc.\ 2nd ACM Symposium on Computational Geometry (SoCG)},
  pages 24--33, 1986.
\newblock \href {https://doi.org/10.1145/10515.10518}
  {\path{doi:10.1145/10515.10518}}.

\bibitem{CN88}
Wei{-}pang Chin and Simeon~C. Ntafos.
\newblock Optimum watchman routes.
\newblock {\em Inf. Process. Lett.}, 28(1):39--44, 1988.
\newblock \href {https://doi.org/10.1016/0020-0190(88)90141-X}
  {\path{doi:10.1016/0020-0190(88)90141-X}}.

\bibitem{CN91}
Wei{-}pang Chin and Simeon~C. Ntafos.
\newblock Shortest watchman routes in simple polygons.
\newblock {\em Discret. Comput. Geom.}, 6:9--31, 1991.
\newblock \href {https://doi.org/10.1007/BF02574671}
  {\path{doi:10.1007/BF02574671}}.

\bibitem{BGK+05}
Mark {de~Berg}, Joachim Gudmundsson, Matthew~J. Katz, Christos Levcopoulos,
  Mark~H. Overmars, and A.~Frank {van der Stappen}.
\newblock {TSP} with neighborhoods of varying size.
\newblock {\em J. Algorithms}, 57(1):22--36, 2005.
\newblock \href {https://doi.org/10.1016/j.jalgor.2005.01.010}
  {\path{doi:10.1016/j.jalgor.2005.01.010}}.

\bibitem{DinurS14}
Irit Dinur and David Steurer.
\newblock Analytical approach to parallel repetition.
\newblock In {\em Proc.\ 46th ACM Symposium on Theory of Computing ({STOC})},
  pages 624--633, 2014.
\newblock \href {https://doi.org/10.1145/2591796.2591884}
  {\path{doi:10.1145/2591796.2591884}}.

\bibitem{DELM03}
Moshe Dror, Alon Efrat, Anna Lubiw, and Joseph S.~B. Mitchell.
\newblock Touring a sequence of polygons.
\newblock In {\em Proc.\ 35th {ACM} Symposium on Theory of Computing ({STOC})},
  pages 473--482, 2003.
\newblock \href {https://doi.org/10.1145/780542.780612}
  {\path{doi:10.1145/780542.780612}}.

\bibitem{DO08}
Moshe Dror and James~B. Orlin.
\newblock Combinatorial optimization with explicit delineation of the ground
  set by a collection of subsets.
\newblock {\em {SIAM} J. Discret. Math.}, 21(4):1019--1034, 2008.
\newblock \href {https://doi.org/10.1137/050636589}
  {\path{doi:10.1137/050636589}}.

\bibitem{DEG+07}
Adrian Dumitrescu, Annette Ebbers{-}Baumann, Ansgar Gr{\"{u}}ne, Rolf Klein,
  and G{\"{u}}nter Rote.
\newblock On the geometric dilation of closed curves, graphs, and point sets.
\newblock {\em Comput. Geom.}, 36(1):16--38, 2007.
\newblock \href {https://doi.org/10.1016/j.comgeo.2005.07.004}
  {\path{doi:10.1016/j.comgeo.2005.07.004}}.

\bibitem{DJ11a}
Adrian Dumitrescu and Minghui Jiang.
\newblock The forest hiding problem.
\newblock {\em Discret. Comput. Geom.}, 45(3):529--552, 2011.
\newblock \href {https://doi.org/10.1007/s00454-010-9261-4}
  {\path{doi:10.1007/s00454-010-9261-4}}.

\bibitem{DM03}
Adrian Dumitrescu and Joseph S.~B. Mitchell.
\newblock Approximation algorithms for {TSP} with neighborhoods in the plane.
\newblock {\em J. Algorithms}, 48(1):135--159, 2003.
\newblock \href {https://doi.org/10.1016/S0196-6774(03)00047-6}
  {\path{doi:10.1016/S0196-6774(03)00047-6}}.

\bibitem{DT12}
Adrian Dumitrescu and Csaba~D. T{\'{o}}th.
\newblock Watchman tours for polygons with holes.
\newblock {\em Comput. Geom.}, 45(7):326--333, 2012.
\newblock \href {https://doi.org/10.1016/j.comgeo.2012.02.001}
  {\path{doi:10.1016/j.comgeo.2012.02.001}}.

\bibitem{EGK07}
Annette Ebbers{-}Baumann, Ansgar Gr{\"{u}}ne, and Rolf Klein.
\newblock Geometric dilation of closed planar curves: New lower bounds.
\newblock {\em Comput. Geom.}, 37(3):188--208, 2007.
\newblock \href {https://doi.org/10.1016/j.comgeo.2004.12.009}
  {\path{doi:10.1016/j.comgeo.2004.12.009}}.

\bibitem{ESW01}
Stephan~J. Eidenbenz, Christoph Stamm, and Peter Widmayer.
\newblock Inapproximability results for guarding polygons and terrains.
\newblock {\em Algorithmica}, 31(1):79--113, 2001.
\newblock \href {https://doi.org/10.1007/s00453-001-0040-8}
  {\path{doi:10.1007/s00453-001-0040-8}}.

\bibitem{EFS09}
Khaled~M. Elbassioni, Aleksei~V. Fishkin, and Ren{\'{e}} Sitters.
\newblock Approximation algorithms for the {E}uclidean traveling salesman
  problem with discrete and continuous neighborhoods.
\newblock {\em Int. J. Comput. Geom. Appl.}, 19(2):173--193, 2009.
\newblock \href {https://doi.org/10.1142/S0218195909002897}
  {\path{doi:10.1142/S0218195909002897}}.

\bibitem{Feige98}
Uriel Feige.
\newblock A threshold of ln \emph{n} for approximating set cover.
\newblock {\em J. {ACM}}, 45(4):634--652, 1998.
\newblock \href {https://doi.org/10.1145/285055.285059}
  {\path{doi:10.1145/285055.285059}}.

\bibitem{Fe55}
Leonard Few.
\newblock The shortest path and shortest road through $n$ points.
\newblock {\em Mathematika}, 2:141--144, 1955.
\newblock \href {https://doi.org/10.1112/S0025579300000784}
  {\path{doi:10.1112/S0025579300000784}}.

\bibitem{GGJ76}
Michael~R. Garey, Ronald~L. Graham, and David~S. Johnson.
\newblock Some {NP}-complete geometric problems.
\newblock In {\em Proc.\ 8th {ACM} Symposium on Theory of Computing ({STOC})},
  pages 10--22, 1976.
\newblock \href {https://doi.org/10.1145/800113.803626}
  {\path{doi:10.1145/800113.803626}}.

\bibitem{GJ79}
Michael~R. Garey and David~S. Johnson.
\newblock {\em Computers and Intractability: A Guide to the Theory of
  $\NP$-Completeness}.
\newblock W. H. Freeman and Company, New York, NY, 1979.

\bibitem{GN98}
Laxmi~P. Gewali and Simeon~C. Ntafos.
\newblock Watchman routes in the presence of a pair of convex polygons.
\newblock {\em Journal of Information Sciences}, 105(1-4):123--149, 1998.
\newblock \href {https://doi.org/10.1016/S0020-0255(97)10033-0}
  {\path{doi:10.1016/S0020-0255(97)10033-0}}.

\bibitem{GL99}
Joachim Gudmundsson and Christos Levcopoulos.
\newblock A fast approximation algorithm for {TSP} with neighborhoods.
\newblock {\em Nord. J. Comput.}, 6(4):469, 1999.

\bibitem{Kirkpatrick15}
David~G. Kirkpatrick.
\newblock An $o(\lg \lg opt)$-approximation algorithm for multi-guarding
  galleries.
\newblock {\em Discret. Comput. Geom.}, 53(2):327--343, 2015.
\newblock \href {https://doi.org/10.1007/s00454-014-9656-8}
  {\path{doi:10.1007/s00454-014-9656-8}}.

\bibitem{LL84}
Christos Levcopoulos and Andrzej Lingas.
\newblock Bounds on the length of convex partitions of polygons.
\newblock In {\em Proc. 4th Foundations of Software Technology and Theoretical
  Computer Science ({FSTTCS})}, volume 181 of {\em LNCS}, pages 279--295.
  Springer, 1984.
\newblock \href {https://doi.org/10.1007/3-540-13883-8\_78}
  {\path{doi:10.1007/3-540-13883-8\_78}}.

\bibitem{LundY94}
Carsten Lund and Mihalis Yannakakis.
\newblock On the hardness of approximating minimization problems.
\newblock {\em J. {ACM}}, 41(5):960--981, 1994.
\newblock \href {https://doi.org/10.1145/185675.306789}
  {\path{doi:10.1145/185675.306789}}.

\bibitem{MM95}
Cristian~S. Mata and Joseph S.~B. Mitchell.
\newblock Approximation algorithms for geometric tour and network design
  problems (extended abstract).
\newblock In {\em Proc.\ 11th ACM Symposium on Computational Geometry
  ({SoCG})}, pages 360--369, 1995.
\newblock \href {https://doi.org/10.1145/220279.220318}
  {\path{doi:10.1145/220279.220318}}.

\bibitem{Mi99}
Joseph S.~B. Mitchell.
\newblock Guillotine subdivisions approximate polygonal subdivisions: {A}
  simple polynomial-time approximation scheme for geometric {TSP}, $k$-{MST},
  and related problems.
\newblock {\em {SIAM} J. Comput.}, 28(4):1298--1309, 1999.
\newblock \href {https://doi.org/10.1137/S0097539796309764}
  {\path{doi:10.1137/S0097539796309764}}.

\bibitem{Mi00}
Joseph S.~B. Mitchell.
\newblock Geometric shortest paths and network optimization.
\newblock In J{\"{o}}rg{-}R{\"{u}}diger Sack and Jorge Urrutia, editors, {\em
  Handbook of Computational Geometry}, pages 633--701. North Holland /
  Elsevier, 2000.
\newblock \href {https://doi.org/10.1016/b978-044482537-7/50016-4}
  {\path{doi:10.1016/b978-044482537-7/50016-4}}.

\bibitem{Mi13}
Joseph S.~B. Mitchell.
\newblock Approximating watchman routes.
\newblock In {\em Proc.\ 24th {ACM-SIAM} Symposium on Discrete Algorithms
  ({SODA})}, pages 844--855, 2013.
\newblock \href {https://doi.org/10.1137/1.9781611973105.60}
  {\path{doi:10.1137/1.9781611973105.60}}.

\bibitem{Mi17}
Joseph S.~B. Mitchell.
\newblock Shortest paths and networks.
\newblock In J.E. Goodman, J.~O'Rourke, and C.D. T\'oth, editors, {\em Handbook
  of Discrete and Computational Geometry}, chapter~31, pages 811--848. CRC
  Press, 3rd edition, 2017.

\bibitem{Moshkovitz15}
Dana Moshkovitz.
\newblock The projection games conjecture and the {NP}-hardness of $\ln
  n$-approximating set-cover.
\newblock {\em Theory Comput.}, 11:221--235, 2015.
\newblock \href {https://doi.org/10.4086/toc.2015.v011a007}
  {\path{doi:10.4086/toc.2015.v011a007}}.

\bibitem{Nelson07}
Jelani Nelson.
\newblock A note on set cover inapproximability independent of universe size.
\newblock {\em Electron. Colloquium Comput. Complex.}, {TR07-105}, 2007.
\newblock URL:
  \url{https://eccc.weizmann.ac.il/eccc-reports/2007/TR07-105/index.html},
  \href {http://arxiv.org/abs/TR07-105} {\path{arXiv:TR07-105}}.

\bibitem{NZ20}
Bengt~J. Nilsson and Pawe{\l} {\.Z}yli{\'n}ski.
\newblock How to keep an eye on small things.
\newblock {\em International Journal of Computational Geometry \&
  Applications}, 30(02):97--120, 2020.
\newblock \href {https://doi.org/10.1142/S0218195920500053}
  {\path{doi:10.1142/S0218195920500053}}.

\bibitem{Nt92}
Simeon~C. Ntafos.
\newblock Watchman routes under limited visibility.
\newblock {\em Comput. Geom.}, 1:149--170, 1992.
\newblock \href {https://doi.org/10.1016/0925-7721(92)90014-J}
  {\path{doi:10.1016/0925-7721(92)90014-J}}.

\bibitem{NG94}
Simeon~C. Ntafos and Laxmi~P. Gewali.
\newblock External watchman routes.
\newblock {\em Vis. Comput.}, 10(8):474--483, 1994.
\newblock \href {https://doi.org/10.1007/BF01910637}
  {\path{doi:10.1007/BF01910637}}.

\bibitem{OR17}
J.~O'Rourke.
\newblock Visibility.
\newblock In J.E. Goodman, J.~O'Rourke, and C.D. T\'oth, editors, {\em Handbook
  of Discrete and Computational Geometry}, chapter~33, pages 875--896. CRC
  Press, 3rd edition, 2017.

\bibitem{OST17}
Joeseph O'Rourke and Subhash~Suri an~Csaba D.~T\'oth.
\newblock Polygons.
\newblock In J.E. Goodman, J.~O'Rourke, and C.D. T\'oth, editors, {\em Handbook
  of Discrete and Computational Geometry}, chapter~30, pages 787--810. CRC
  Press, 3rd edition, 2017.

\bibitem{P77}
Christos~H. Papadimitriou.
\newblock The {E}uclidean traveling salesman problem is {NP}-complete.
\newblock {\em Theor. Comput. Sci.}, 4(3):237--244, 1977.
\newblock \href {https://doi.org/10.1016/0304-3975(77)90012-3}
  {\path{doi:10.1016/0304-3975(77)90012-3}}.

\bibitem{Pol1918}
George P\'olya.
\newblock Zahlentheoretisches und {W}ahrscheinlichkeitstheoretisches \"uber die
  {S}ichweite im {W}alde.
\newblock {\em Arch. Math. Phys. Series 2}, 27:135--142, 1918.

\bibitem{SS06}
Shmuel Safra and Oded Schwartz.
\newblock On the complexity of approximating tsp with neighborhoods and related
  problems.
\newblock {\em Comput. Complex.}, 14(4):281--307, 2006.
\newblock \href {https://doi.org/10.1007/s00037-005-0200-3}
  {\path{doi:10.1007/s00037-005-0200-3}}.

\bibitem{SA00}
Paul~R. Scott and Poh~Wah Awyong.
\newblock Inequalities for convex sets.
\newblock {\em Journal of Inequalities in Pure and Applied Mathematics},
  1:article~6, 2000.

\bibitem{Tan01}
Xuehou Tan.
\newblock Fast computation of shortest watchman routes in simple polygons.
\newblock {\em Inf. Process. Lett.}, 77(1):27--33, 2001.
\newblock \href {https://doi.org/10.1016/S0020-0190(00)00146-0}
  {\path{doi:10.1016/S0020-0190(00)00146-0}}.

\bibitem{Tan07}
Xuehou Tan.
\newblock A linear-time 2-approximation algorithm for the watchman route
  problem for simple polygons.
\newblock {\em Theor. Comput. Sci.}, 384(1):92--103, 2007.
\newblock \href {https://doi.org/10.1016/j.tcs.2007.05.021}
  {\path{doi:10.1016/j.tcs.2007.05.021}}.

\bibitem{THI99}
Xuehou Tan, Tomio Hirata, and Yasuyoshi Inagaki.
\newblock Corrigendum to "{A}n incremental algorithm for constructing shortest
  watchman routes".
\newblock {\em Int. J. Comput. Geom. Appl.}, 9(3):319--323, 1999.
\newblock \href {https://doi.org/10.1142/S0218195999000212}
  {\path{doi:10.1142/S0218195999000212}}.

\bibitem{Ur00}
Jorge Urrutia.
\newblock Art gallery and illumination problems.
\newblock In J{\"{o}}rg{-}R{\"{u}}diger Sack and Jorge Urrutia, editors, {\em
  Handbook of Computational Geometry}, pages 973--1027. North Holland /
  Elsevier, 2000.
\newblock \href {https://doi.org/10.1016/b978-044482537-7/50023-1}
  {\path{doi:10.1016/b978-044482537-7/50023-1}}.

\end{thebibliography}

\end{document}